\pgfplotsset{compat=1.15}
\crefname{problem}{problem}{problems}
\Crefname{problem}{Problem}{Problems}
\newtheorem{theorem}{Theorem}
\newtheorem{corollary}[theorem]{Corollary}
\newtheorem{lemma}[theorem]{Lemma}
\newtheorem{remark}{Remark}
\begin{document}

\newtheorem{modl}{ILP Formulation for $k$-Flow Decomposition}
\newenvironment{model}{\begin{samepage}\begin{modl}}{\end{modl}\end{samepage}}

\newcommand{\paths}{\mathcal{P}}
\newcommand{\subpaths}{\mathcal{R}}
\newcommand{\weights}{w}
\newcommand{\pathdecomp}{(\paths, \weights)}
\newcommand{\scs}{\mathcal{R}}
\newcommand{\fdsc}{(G, \scs)}
\newcommand{\val}{\textsc{val}}
\newcommand{\N}{\mathbb{N}}
\DeclarePairedDelimiter\ceil{\lceil}{\rceil}
\DeclarePairedDelimiter\floor{\lfloor}{\rfloor}

\newcommand{\red}[1]{{\color{red} #1)}}
\newcommand{\alexcom}[1]{{\color{blue} (Alex: #1)}}
\newcommand{\alex}[1]{{\color{blue} #1}}
\newcommand{\andi}[1]{{\color{orange} (Andi: #1)}}
\newcommand{\todo}[1]{{\color{blue} (TODO: #1)}}

\newcommand{\Gurobi}{\textsc{GUROBI}\xspace}
\newcommand{\ST}{\textsf{MFD-original}\xspace}
\newcommand{\OP}{\textsf{MFD-optimized}\xspace}
\newcommand{\IR}{$\mathsf{}$\xspace}
\newcommand{\SP}{\textsf{MFDSC-original}\xspace}
\newcommand{\SO}{\textsf{MFDSC-optimized}\xspace}
\newcommand{\IX}{\textsf{MIFD-original}\xspace}
\newcommand{\IO}{\textsf{MIFD-optimized}\xspace}
\newcommand{\TB}{$\mathsf{Toboggan}$\xspace}
\newcommand{\CS}{$\mathsf{Coaster}$\xspace}
\newcommand{\GR}{$\mathsf{Greedy}$\xspace}
\newcommand{\IA}{\textsf{MFDW}\xspace}
\newcommand{\CSH}{$\mathsf{Coaster Heuristic}$\xspace}
\newcommand{\CF}{$\mathsf{Catfish}$\xspace}
\newcommand{\IFD}{$\mathsf{IFDSolver}$\xspace}
\newcommand{\SA}{\textbf{SRR020730-Salmon}\xspace}
\newcommand{\StT}{\textbf{SRR307903-StringTie}\xspace}
\newcommand{\RS}{\textbf{Reference-Sim}\xspace}

\newcommand\relatedversion{}

\title{Accelerating ILP solvers for Minimum Flow Decompositions through search space and dimensionality reductions\thanks{This work was partially funded by the European Research Council (ERC) under the European Union's Horizon 2020 research and innovation programme (grant agreement No.~851093, SAFEBIO), and by the Academy of Finland (grants No.~322595, 352821, 346968).\\\indent We are very grateful to Manuel C\'aceres for very helpful discussing on sets of independent safe paths.}}
\date{}
\author[1]{Andreas Grigorjew}
\author[2]{Fernando H. C. Dias}
\author[3]{Andrea Cracco}
\author[3]{Romeo Rizzi}
\author[1]{Alexandru~I.~Tomescu}
\affil[1]{University of Helsinki}
\affil[2]{Aalto University}
\affil[3]{University of Verona}

\maketitle

\begin{abstract}
Given a flow network, the Minimum Flow Decomposition (MFD) problem is finding the smallest possible set of weighted paths whose superposition equals the flow. It is a classical, strongly NP-hard problem that is proven to be useful in RNA transcript assembly and applications outside of Bioinformatics.

We improve an existing ILP (Integer Linear Programming) model by Dias et al.~[RECOMB 2022] for DAGs by decreasing the solver's search space using \emph{solution safety} and several other optimizations. This results in a significant speedup compared to the original ILP, of up to 55-90$\times$ on average on the hardest instances. Moreover, we show that our optimizations apply also to MFD problem variants, resulting in similar speedups, going up to 123$\times$ on the hardest instances.

We also developed an ILP model of reduced dimensionality for an MFD variant in which the solution path weights are restricted to a given set. This model can find an optimal MFD solution for most instances, and overall, its accuracy significantly outperforms that of previous greedy algorithms while being up to an order of magnitude faster than our optimized ILP.
\end{abstract}

\noindent \textbf{Keywords:} Flow decomposition; Integer Linear Programming; safety; RNA-seq; RNA transcript assembly; isoform

% End of title page, which is not counted against page limit in RECOMB 2024
\thispagestyle{empty}
\pagebreak
\setcounter{page}{1}

\section{Introduction}
\label{sec:intro}

\paragraph{Motivation.} Minimum Flow Decomposition (MFD) is the problem of finding a minimum number of weighted paths to decompose a flow on a directed graph, such that the sum of weights of the paths crossing an edge is equal to the flow on that edge. 
%It is well known that every flow admits a flow decomposition, and Vatinlen et al.~\cite{VATINLEN20081390} have shown that an upper bound for the minimum size is $|E|-|V|+2$, where $V$ is the set of nodes and  $E$ is the set of edges. 
Hartman et al.~\cite{hartman2012split} proved that MFD is NP-hard, even on directed acyclic graphs (DAGs), and even if the flow values are only in $\{1,2,4\}$.

MFD is a crucial tool in many applications, for example, in networking, where the paths represent traffic going through a sequence of routers. Minimizing the number of these paths reduces the amount of maintenance work, among other additional benefits~\cite{hartman2012split}. In RNA sequence reconstruction problems, 
%\alexcom{``the RNA sequence reconstruction problem from bioinformatics}, 
MFD is used by e.g.~\cite{pertea2015stringtie,tomescu2013novel,gatter2019ryuto,bernard2014efficient,tomescu2015explaining,williams2019rna} to decompose a \textit{splice graph}, in which every node is a gene \emph{exon}, and every weighted path is an RNA sequence that needs to be reconstructed. This splice graph can, for example, be constructed by aligning RNA-seq reads to a reference genome.
% (as in the dataset of Shao et al.~\cite{shao2017theory}, also used by Kloster et al.~\cite{kloster2018practical}). 
MFD can also used to reconstruct viral quasispecies~\cite{baaijens2020strain}. The input graphs in these applications are DAGs, and all previous work on MFD that we cite below is also on DAGs. As such, in this paper, all graphs are DAGs.

Moreover, also MFD variants are of practical interest. For example, one practical strategy is to incorporate longer reads as \emph{subpaths} potentially spanning more than two nodes~\cite{pertea2015stringtie,shao2017accurate,williams2021flow,hagemann2020single}, which must appear in some solution path (i.e. reconstructed transcript or viral genome). Moreover, the edge weights may not form an exact flow, due to sequencing errors and read mapping artifacts. An MFD variant adapted to this case is to consider intervals of edge weights instead, called \emph{inexact flows} in~\cite{safikhani2013ssp,williams2019rna}.

\paragraph{Related work.} Given the complexity of the problem, it is common to use heuristic methods~\cite{shao2017theory}, for example, by using a greedy approach, iteratively removing the path with the largest currently available flow until the flow is fully decomposed. Cáceres et al.~\cite{reyes2022width} showed that for DAGs this approach performs well, with an approximation ratio of $O(\log \val(f))$ (where $\val(f)$ is the total flow of the graph), only if the \textit{width} of the graph (the minimum number of paths to cover all edges) does not increase in the process; otherwise it gives an exponentially worse result than the optimal with an approximation ratio of $\Omega(m/\log m)$. 

Mumey et al.~\cite{mumey2015parity} gave an approximation algorithm on DAGs with ratio $\lambda^{\log\lVert f\rVert}\log\lVert f\rVert$, where $\lambda$ is the longest source-to-sink path length and $\lVert f\rVert$ is the maximum-norm on the flow $f$ (i.e., the largest flow value of the graph). Polynomial-time approximation algorithms with sublinear ratio in the size of the graph are not known, neither whether the problem is in APX, i.e.~admits a constant-factor approximation algorithm.

Kloster et al.~\cite{kloster2018practical} proposed the first algorithm solving MFD optimally, which is linear-time fixed-parameter tractable (FPT), parameterized on the MFD size $k$. While it performs well for small $k$ values, it quickly becomes unfeasible for larger $k$ values, since the parameterized runtime grows exponentially on $k$ with a degree $2$ polynomial in the exponent. This is a limiting factor to the usability of such algorithms in practical applications. 

In a recent work, Dias et al.~\cite{dias2022fast} formulated an Integer Linear Problem (ILP) model for MFD on DAGs and showed that it performs faster on graphs with larger solution sizes than the FPT algorithm of Kloster et al.~\cite {kloster2018practical}. Due to this better performance of an ILP, it is a natural to work on further optimizations. An additional strength of ILP formulations is their extensibility, since ILPs can easily be modified to handle other aspects of the applications, such as subpath constraints and inexact flows cited above, see~\cite{dias2022fast}.

\paragraph{Contributions.} In this paper, we significantly optimize the ILP for MFD using the notion of \emph{safe paths} for all flow decompositions in DAGs, first studied by Khan et al.~\cite{Khan:2022wo}. More specifically, a path $P$ is said to be \emph{safe} if all flow decompositions (of any size) have at least one path containing $P$ as a subpath. Clearly, safe paths must appear in any MFD. We show how they can be used to reduce the ILP search space by using them to fix many binary variables of the ILP. At a high level, this approach is along the same lines as for other NP-hard problems, for example, Bumpus et al.~\cite{bumpus2022search} analyzed \textit{$c$-essential vertices}, as those vertices belonging to any $c$-approximate solution, for some specific graph problems whose solutions are sets of vertices. Clearly, $c$-essential vertices also belong to all optimal solutions, and can be simply removed from the graph when running e.g.~an FPT algorithm (see~\cite{bumpus2022search} for details). 

In our case, safe solutions are sets of \emph{paths} (not single vertices), and it is non-trivial how to use them to simplify the input graph, or how to integrate them into a combinatorial algorithm (e.g., we cannot simply remove them, they could overlap, etc.). 
% Previously,  as kernelization for parameterized algorithms on vertex-subset problems on graphs, which similarly to solution safety were defined to be part of every $c$-approximate solution.
However, incorporating safe paths into an ILP is easily supported since they can be modelled by additional constraints, or by fixing some ILP variables. Moreover, in order to use more than one safe path, we observe that pairwise unreachable paths (they are not subpaths of a superpath) must be present in \emph{different} MFD solution paths. As a fast heuristic to select a set of such paths of large total length, we use a reduction to a maximum weight antichain problem~\cite{rival2012graphs}. In \Cref{apx:optimal-safe-paths} we show that a \emph{maximum-length} set of pairwise unreachable safe paths can also be found in polynomial time, but this procedure is overall computationally more involved, and thus may be too expensive as a pre-processing step.

% propose to find a large set of safe paths that are pairwise unreachable and, thus, . We solve this 
% Khan et al.~showed that practical instances from bioinformatics contain ``many and long'' safe paths (refer to the experimental results in~\cite{Khan:2022wo}), which suggests that they can be a good candidate to simplify the problem. 

Furthermore, as observed by C\'aceres et al.~\cite{reyes2022width}, the size of any edge antichain (like ours) is a lower bound on the MFD size. As such, we can also use this lower bound to check if a heuristic MFD solution size attains this lower bound, in which case it is optimal. To the best of our knowledge, this is the first time such a (non-trivial) lower bound is used in a solver.%, and we also show experimentally that equality holds in many cases \alexcom{We didn't show these results}.
Lastly, before running the ILP solver, we also simplify the input graph using the Y-to-V graph compaction, also used in the implementation of the FPT algorithm of Kloster et al.~\cite {kloster2018practical}, but not used in the one of the ILP by Dias et al.~\cite{dias2022fast}. Using all these optimizations, we experimentally show that we obtain significant speedups over the original ILP, of up to 55-90$\times$ on the harder inputs. 

In addition, we also adapt these optimizations to the MFD variants considering inexact flows and subpath constraints, proving that such adaptations can easily be done also for MFD variants of practical interest. Furthermore, we also show (in \Cref{sec:y-to-v}) how the Y-to-V graph compaction can be modified for these two variants, an issue which has not been considered before in the literature. With these optimizations, on the hardest graphs among the instances with inexact flows, we obtain an impressive speedup of 123$\times$ compared to the original ILP for MFD with subpath constraints. %\alexcom{123 is actually for MFD with inexact flows now}.

As a last contribution, we tackle an MFD variant where the weights of the solution paths must belong to a given set. We show that this problem on DAGs admits a simpler and thus faster ILP than the one for MFD. When the set of given weights consists of powers of 2 (up to the maximum flow value of any edge), this ILP provides a $\log||f||$-factor approximation ratio for the original MFD problem. For practical applications, we show that if we also add the flow values of all edges to the given set of weights, then this ILP returns an MFD solution with a number of paths that is significantly closer to the minimum one than the state-of-the-art heuristic MFD solver \CF~\cite{shao2017theory}, on instances with large MFD size. These are of particular interest, since our optimized solver runs in under 2 seconds on instances of solution size at most $10$. The ILP for this problem variant is up to a further order of magnitude faster than our optimized ILP, and running in under 1 second on average. 

\section{Preliminaries}
\label{sec:basic}

\paragraph{Definitions.}
We let $G=(V,E)$ be a graph and assume throughout the paper that $G$ is a DAG (i.e. directed and acyclic). For simplicity we also assume that $G$ has no parallel edges and that it has a single source node $s$ and a single sink node $t$.
We denote by $\N$ the natural numbers including $0$ and by $\mathbb{Z}^+ = \mathbb{N}\setminus\{0\}$ the positive integers.
A \emph{$u$-$v$ path} $P$ is a sequence of edges going from the node $u$ to $v$ and its length $|P|$ is defined as its number of edges. Additionally, we identify paths $P$ with functions $P:E\to\{0,1\}$ where $P(e) = 1$ iff $e \in P$. We call two paths $P_1, P_2$ (and resp. edges) \emph{independent}, if there exists no path $P^*$ such that $P_1$ and $P_2$ are subpaths of $P^*$. For a path $P=(e_1,\dots,e_{|P|})$ we denote by $P[\ell..r]\coloneqq(e_\ell,\dots,e_r)$ a subpath of length $r-\ell+1$. A \emph{flow network} is the tuple $G = (V,E,f)$, where $f:E\to\mathbb{Z}^+$ is a flow, i.e. a function that preserves the flow conservation
$\sum_{(u,v) \in E} f_{uv} = \sum_{(v,w) \in E} f_{vw}$, $\forall v \in V \setminus \{s,t\}$.

By a \emph{$k$-flow decomposition} of a flow $f$ (see \Cref{fig:FD}) we denote a family of $s$-$t$ paths $\paths = (P_1,\dots,P_k)$ with their associated weights $w=(w_1,\dots,w_k)\in(\mathbb{Z}^+)^k$, such that:
\begin{equation}
\label{eqn:superposition_of_flow}
\sum_{i \in \{1,\dots,k\}} P_i(u,v)w_i = f_{uv}, \quad \forall (u,v) \in E.
\end{equation}
In this paper, we focus on positive integer flow values, which is motivated by its application in RNA sequence reconstruction. Vatinlen et al.~\cite{VATINLEN20081390} proved that there is always a $(|E|-|V|+2)$-flow decomposition. The \emph{Minimum Flow Decomposition (MFD)} problem asks for a $k$-flow decomposition with minimum $k$.

In case of imperfect data, Williams et al.~\cite{williams2019rna} considered the problem variant \emph{Minimum Inexact Flow Decomposition (MIFD)} where for each edge $(u,v)$ we have a lower bound $L_{uv}$ and an upper bound $R_{uv}$, and constraint \eqref{eqn:superposition_of_flow} is changed to require that $\sum_{i \in \{1,\dots,k\}} P_i(u,v)w_i \in [L_{uv},R_{uv}]$, $\forall(u,v) \in E$. Another MFD variant of practical interest is \emph{Minimum Flow Decomposition with Subpath Constraints (MFDSC)}~\cite{williams2021flow}. The only difference with respect to MFD is that in the input we also have a set of paths $\mathcal{R}$, called \emph{subpath constraints}, and it is required that every path in $\mathcal{R}$ is a subpath of at least one path of the 
decomposition.

% Finding a flow decomposition of minimum size (see Fig. \ref{fig:FD}) can be formally defined as:

% \begin{problem}[Minimum Flow Decomposition (MFD)]
% \label{prob:MinFD}
% Given a flow network $G=(V,E,f)$, find a minimum-sized set of $s$-$t$ paths $\paths = (P_1,\ldots,P_k)$ with associated weights $(w_1,\ldots,w_k)$, with each $w_i \in \mathbb{Z}^{+}$, namely, a solution to the following model:
% \begin{align}
% &\emph{Minimize} \quad |\paths| \nonumber \\
% &\emph{Subject to:}  && \nonumber \\
% & \quad \emph{Flow Superposition Condition} \quad \eqref{eqn:superposition_of_flow}. \nonumber 
% \end{align}
% \end{problem}

\begin{figure}[!h]
\centering
\begin{subfigure}[c]{0.49\linewidth}
\centering
\includegraphics[width=0.65\textwidth]{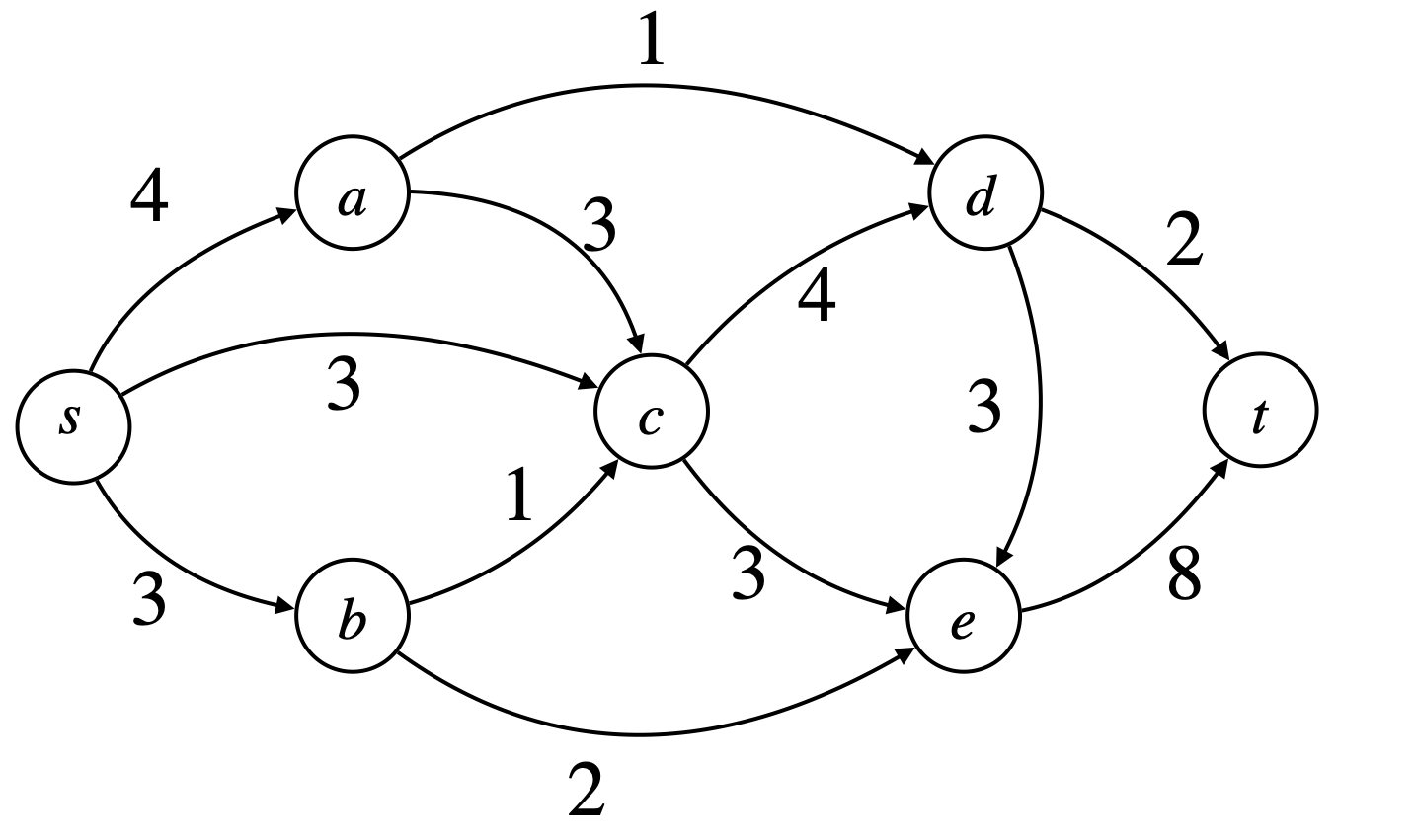}
\caption{A flow network \label{fig:MFD}}
\end{subfigure}
\begin{subfigure}[c]{0.49\linewidth}
\centering
\includegraphics[width=0.65\textwidth]{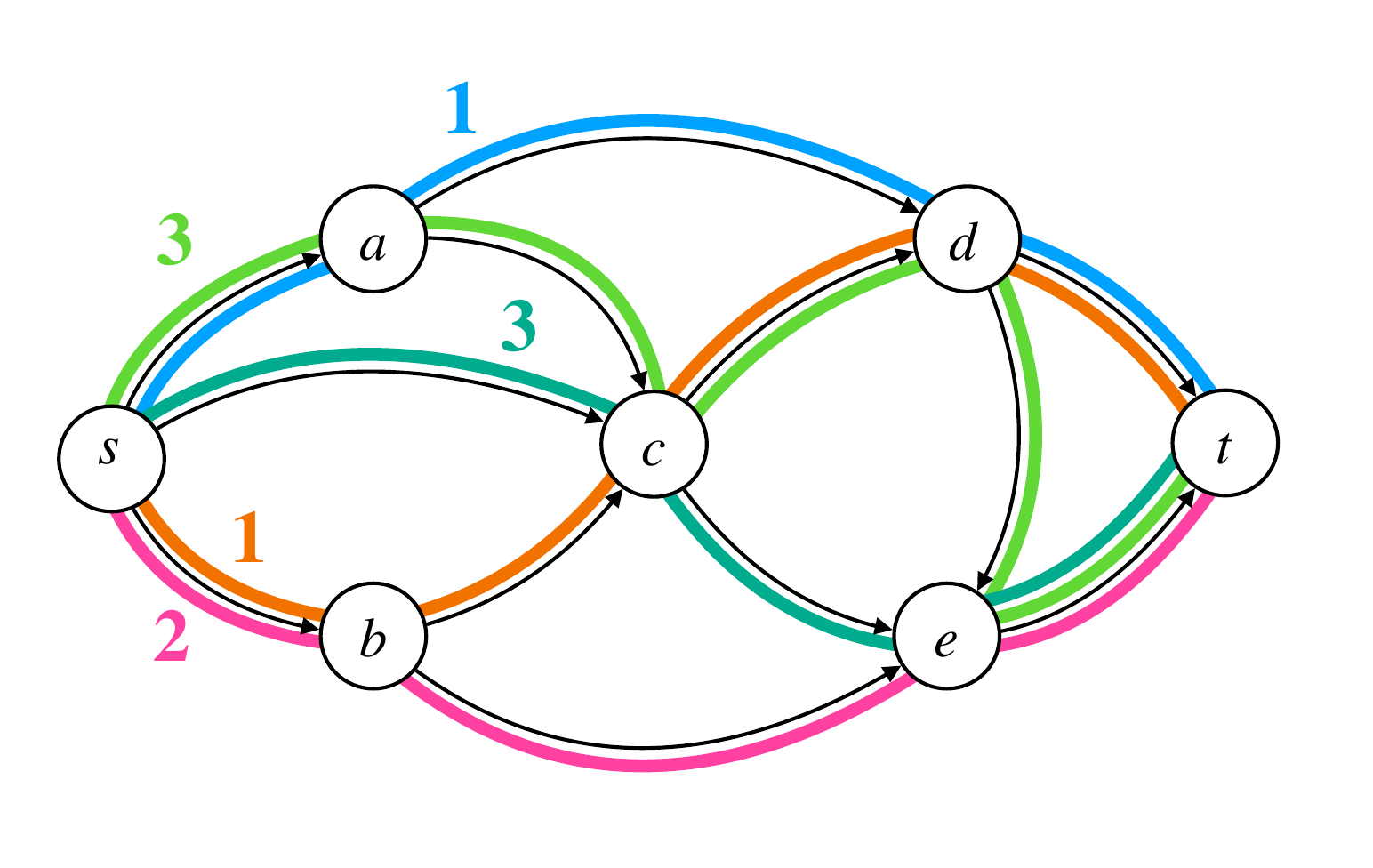}
\caption{A flow decomposition into paths of weights $(1,1,2,3,3)$ 
%\alexcom{The weights in the figure are different than these.}. 
\label{fig:3-FD}}
\end{subfigure}
\caption{Example of a flow network in \eqref{fig:MFD} and a decomposition of it into five $s$-$t$ paths in \eqref{fig:3-FD}\label{fig:FD}}
\end{figure}

\paragraph{ILP formulations.}
We now review the ILP formulation for MFD given by Dias et al.~\cite{dias2022fast}. To model each path $P_i$, $i \in \{1,\dots,k\}$, binary variables $x_{uvi}$ are introduced to represent each edge $(u,v) \in E$. We set $x_{uvi}=P_i(u,v)$, i.e.~$x_{uvi} = 1$ if $(u,v)\in P_i$ and otherwise $x_{uvi} = 0$. At the same time, each path is required to start from the source and end in the sink. For all the intermediary nodes of a path, a unit in-degree and a unit out-degree is required. Those requirements can be modeled by the following constraints. For all $v \in V$ and for all $i \in \{1,\dots,k\}$:

\begin{equation}
\label{eqn:path_definition}
    \sum_{(u,v) \in E} x_{uvi} - \sum_{(v,u) \in E} x_{vui} = 
    \begin{cases}
    0, & \text{if $v \in V \setminus \{s,t\}$}, \\
    1, & \text{if $v = t$}, \\
    -1, & \text{if $v = s$},
    \end{cases} \quad\quad
\end{equation}

Constraint \eqref{eqn:superposition_of_flow} is then modeled as:
\begin{equation}
\label{eq:ILP-flow-superposition}
 f_{uv} - \sum_{i \in \{1,\dots,k\}} x_{uvi}w_i = 0,\quad\forall (u,v) \in E.
\end{equation}

The above constraint contains the non-linear term $x_{uvi}w_i$ that standard linear solvers cannot solve. However, using basic linearization techniques, each such constraint can be replaced by three linear constraints, see \cite{dias2022fast} for details. For completeness, we give the full formulation in \Cref{apx:standard-ilp}.

Dias et al.~\cite{dias2022fast} used the flexibility of this ILP formulation to easily model also the two variants MIFD and MFDSC. The former problem can be trivially modeled in ILP by changing the (linearized version of) constraint \eqref{eq:ILP-flow-superposition} to state that $\sum\limits_{i \in \{1,\dots,k\}} x_{uvi}w_i$ belongs to the interval $[L_{uv},R_{uv}]$. The latter problem can be modeled by introducing additional indicator variables $r_{ij} \in \{0, 1\}$ for every $i \in \{1,\dots,k\}$, and every path $R_j \in \mathcal{R}$, modeling if the subpath $j$ is contained in the $i$-th path of the flow decomposition, via the additional constraints:
\begin{align}
    \sum_{(u,v)\in R_j} x_{uvi} \geq |R_j|r_{ij},\quad &\forall i\in\{1,\dots,k\}, \forall R_j\in\subpaths,\label{eq:path-contains-subpath} \\
    \sum_{i\in\{1,\dots,k\}} r_{ij} \geq 1,\quad & \forall R_j\in\subpaths. \label{eq:subpath-in-atleast-one-path}
\end{align}

% The product $x_{uvi}w_i$ is a product of two decision variables, one binary and one integer. It can be replaced by a new integer variable $\phi_{uvi} \in \mathbb{N}$, such that:
% \begin{equation}
%     \phi_{uvi} = x_{uvi}w_i ,\quad \forall (u,v) \in E, \forall i \in \{1,\dots,k\}. \label{eq:flow_lin_a}
% \end{equation}

% To guarantee equivalence in this substitution of variables, the following additional constraints are necessary. For all $(u,v) \in E$ and $ i \in  \{1,\dots,k\}$: 
% \begin{align}
% & \phi_{uvi} \leq \overline{w} x_{uvi}, \label{eq:flow_lin_b}\\
% & \phi_{uvi} \leq w_i, \label{eq:flow_lin_c}\\
% & \phi_{uvi} \geq w_i - (1-x_{uvi})\overline{w}, \label{eq:flow_lin_d}
% \end{align}
% where $\overline{w}$ is a sufficiently large constant so that constraints \eqref{eq:flow_lin_b} and \eqref{eq:flow_lin_d} always hold for $\phi_{uvi} = x_{uvi}w_i$, which is the case iff $w_i \leq \overline{w}$. The smallest such value is the largest possible weight of any $s$-$t$ path, but choosing $\overline{w} = \max_{(u,v)\in E} f_{uv}$ also suffices. 

% A simple analysis shows that constraint \eqref{eq:flow_lin_a} holds if and only if constraints \eqref{eq:flow_lin_b} to \eqref{eq:flow_lin_d} hold, given that $\phi_{uvi} \geq 0$ (enforced by the condition $\phi_{uvi} \in \N$). The linearization procedure above is a standard technique and for more details about its correctness, refer to \cite{dias2022fast}.

\section{Reducing the ILP search space via safe paths}
\label{sec:optimized-ilp}

%\alexcom{Start a new main section here}
In this section, we optimize the ILP model for MFD presented above based on the notion of \emph{solution safety} (see e.g.~Tomescu and Medvedev~\cite{tomescu2017safe}). Then, in \Cref{rem:safety-for-variants,rem:safety-for-subpaths}, we discuss how this method can also be applied to the MIFD and MFDSC problems.

We call a path \textit{safe} if it is part of every (not necessarily optimal) flow decomposition; that is, if it is a subpath of some path in every decomposition (not necessarily of minimum size). Khan et al.~\cite{Khan:2022wo} characterized all safe paths via the notion of \emph{excess flow}, which can be thought of as the flow value of the first edge of the path minus the flow values of the non-path edges out-going from the internal nodes of the path (i.e., the flow ``leaking'' from the path). Formally:

%\todo{Figure: Maximum Weighted Edge Antichain - DONE ?

\begin{lemma}[\cite{Khan:2022wo}]
%\alexcom{Let's use citations in lemma statements, when they are from previous work, as I did here}
For a path $P = (e_1,\dots,e_{|P|})$ with $e_i=(u_i,u_{i+1})$, define its \emph{excess flow} as
\[f_P\coloneqq f(e_1) - \sum_{\substack{(u_i,v)\in E\\i\in\{2,\dots,|P|-1\},v\neq u_{i+1}}} f(u_i,v).\]
A path $P$ is safe if and only if $f_P > 0$.
\end{lemma}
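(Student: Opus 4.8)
The plan is to prove the two directions separately, in both cases tracking how much decomposition weight \emph{stays on} the path $P$.

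For the sufficiency direction (if $f_P > 0$ then $P$ is safe), I fix an arbitrary flow decomposition $\pathdecomp$ and, for $j \in \{1,\dots,|P|\}$, let $W_j$ denote the total weight of those paths $P_i \in \paths$ that contain the prefix $P[1..j]$, i.e.\ that traverse $e_1,\dots,e_j$ consecutively. Applying the superposition constraint \eqref{eqn:superposition_of_flow} to $e_1$ gives $W_1 = f(e_1)$. The key step is a monotonicity estimate: every path counted in $W_{j-1}$ reaches the internal node $u_j$ via $e_{j-1}$ and leaves it through exactly one out-edge, so it is still counted in $W_j$ unless it uses one of the non-path out-edges $(u_j,v)$ with $v\neq u_{j+1}$. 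Since the paths sharing a fixed edge have total weight equal to its flow, the weight that ``leaks off'' at $u_j$ is at most $\sum_{(u_j,v)\in E,\, v\neq u_{j+1}} f(u_j,v)$, so $W_j \ge W_{j-1} - \sum_{(u_j,v)\in E,\, v\neq u_{j+1}} f(u_j,v)$. Telescoping these inequalities over the internal nodes of $P$ yields $W_{|P|}\ge f_P>0$, so some decomposition path contains all of $P$; as the decomposition was arbitrary, $P$ is safe.

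For the necessity direction I would argue the contrapositive: assuming $f_P\le 0$, I construct a decomposition in which no path contains $P$. Starting from any decomposition, as long as some path $Q$ still contains all of $P$ I reduce the total weight $W_{|P|}$ of such paths by a local exchange. The hypothesis $f_P\le 0$ says the total non-path out-flow over the internal nodes is at least $f(e_1)$; together with $W_{|P|}>0$ this forces an internal node $u_j$ where the non-path out-flow strictly exceeds the weight currently leaking off $P$ there, so there is a decomposition path $Q'$ using a non-path out-edge at $u_j$ without having followed $P$ up to $u_j$. I then swap the suffixes of $Q$ and $Q'$ at $u_j$, splitting by weight if their weights differ. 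Because $G$ is a DAG, every resulting $s$-$t$ walk is automatically a simple path, the edge flows are preserved, and the rerouted copy of $Q$ now leaves $P$ at $u_j$. This decreases $W_{|P|}$ by a positive integer, so by integrality the process terminates at a decomposition with $W_{|P|}=0$, witnessing that $P$ is not safe.

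The main obstacle is this necessity direction: one must guarantee that an exchange partner $Q'$ exists whenever $W_{|P|}>0$, and that after swapping neither rerouted path accidentally contains $P$ somewhere else. The existence of $Q'$ is exactly where $f_P\le 0$ (enough total leakage capacity along $P$) enters, and the ``no accidental occurrence'' claim uses that $G$ is a DAG with distinct edges, so each $e_i$ occurs at most once in any path and the prefix of $Q'$ reaching $u_j$ is not $P[1..j-1]$ by the choice of $Q'$. A more computational alternative would build a single sub-flow $h\le f$ with $h(e_1)=f(e_1)$ that diverts all of this flow onto non-path out-edges before the last edge of $P$; then $h$ and $f-h$ decompose separately and neither family can contain $P$. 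I prefer the exchange argument because it stays inside a valid decomposition throughout and so avoids verifying the global routing feasibility of such an $h$.
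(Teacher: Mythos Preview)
This lemma is quoted from \cite{Khan:2022wo} and the present paper does not prove it, so there is no in-paper argument to compare against. Your proposal is nonetheless sound on both directions. The sufficiency half is exactly the (terse) argument the paper later gives for the inexact analogue, \Cref{lem:inexact-excess-flow}: the weight $W_1=f(e_1)$ entering $P$ can lose at most the non-path out-flow at each internal node, so positive excess forces $W_{|P|}>0$. (Your telescoping naturally ranges over all internal nodes $u_2,\dots,u_{|P|}$, whereas the displayed definition of $f_P$ stops the summation at $i=|P|-1$; the paper's own proof of \Cref{lem:inexact-excess-flow} uses the range $\{2,\dots,|P|\}$, so this is evidently a typo in the statement rather than a gap in your reasoning.)

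For necessity your exchange argument is correct, and the two points you flag as delicate are handled. Writing $F_j$ for the non-path out-flow at $u_j$ and $L_j=W_{j-1}-W_j$ for the weight currently leaking there, the hypotheses $f_P\le 0$ and $W_{|P|}>0$ give $\sum_j F_j\ge f(e_1)>f(e_1)-W_{|P|}=\sum_j L_j$, so some internal $u_j$ has $F_j>L_j$; the surplus $F_j-L_j$ is carried by paths $Q'$ that reach $u_j$ and use a non-path out-edge without having traversed $P[1..j-1]$. After the suffix swap, acyclicity confines $e_1,\dots,e_{j-1}$ to the prefix side and $e_j,\dots,e_{|P|}$ to the suffix side of the junction; since $Q'$'s prefix does not end in $P[1..j-1]$ the new $Q'$ cannot glue into a copy of $P$, and the new $Q$ omits $e_j$ altogether. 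The integral drop in $W_{|P|}$ then gives termination.
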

Since we assume the flow to be positive, every edge (as a path of length $1$) is safe. Khan et al.~\cite{Khan:2022wo} gave a simple algorithm for finding all safe paths, as follows. We can find all safe subpaths of each $s$-$t$ path $P=(e_1,\dots,e_{|P|})$ of an arbitrary flow decomposition with a two-pointer algorithm. Starting with $\ell = 1$ and $r = 2$ as an inclusive interval (i.e., including two edges), we increase $r$ as long as $f_{P[\ell..r]} > 0$, and increase $\ell$ as long as $f_{P[\ell..r]} \leq 0$. That is, for every index $r=2,\dots,|P|$, we find the minimum $\ell<r$ such that the subpath $P[\ell..r]$ is safe. The runtime of this procedure is $O(\text{out-degree}(P))$, where $\text{out-degree}(P)\coloneqq\sum_{(u,v)\in P}\text{out-degree}(u)$, since flows of edges across $P$ are at most added and subtracted twice throughout the algorithm. To find the set $\mathcal{S}$ of all safe paths of $G$, we can find any flow decomposition $\paths$ and run the two-pointer algorithm on all paths in $\paths$. For quickly obtaining a flow decomposition, one could e.g. use the greedy approach of removing paths of largest currently available flow in runtime $O(|\paths|\cdot(n+m))$ and use it as upper bound to the optimal solution in the ILP~\cite{VATINLEN20081390}. In our implementations for the three problems MFD, MIFD, MFDSC we use flow decompositons obtained from the fastest heuristic solvers for these problems.

An \emph{(edge) antichain} is a subset $Q\subseteq E$ such that every pair of edges $e_1, e_2\in Q$ is independent. The size of any edge antichain is also a lower bound on the size of an MFD, as noted by~\cite{reyes2022width}, since each edge in the antichain must be traversed by \emph{different} paths in any flow decomposition. The idea is to use such edges (and extensions of them via safe paths, as we discuss next) to fix some ILP variables.

For every edge $e \in E$, denote by $a(e)$ the length of the longest safe path traversing $e$, i.e.~$a(e) \coloneqq \max_{P \in \mathcal{S}, e \in P} |P|$; this can be obtained from the computation of safe paths described above. Let $\paths(Q) = \{P_1,\dots,P_{|Q|}\}$ be the corresponding longest safe paths passing through the edges in $Q$; see \Cref{fig:safe-paths} in \Cref{apx:additional-proofs} for an illustration. If $Q \subseteq E$ is an antichain, then also the corresponding longest safe paths for the edges of $Q$ must be fully traversed by \emph{different} paths in any flow decomposition (otherwise, if some $P_i$ and $P_j$ were traversed by the same $s$-$t$ path in some flow decomposition, then also the corresponding edges in $Q$ could be traversed by the same $s$-$t$, contradicting the fact that $Q$ is an antichain).
%\andi{Is this not trivial?\alexcom{We do need to make this reasoning, and better to have one more argument to write and not assume too much of the RECOMB reviewers.}} 

As such, without loss of generality, we can fix in the ILP the $i$-th path of the MFD to contain the $i$-th safe path in $\paths(Q)$:
\begin{equation}
\label{eq:safety-constraints}
    x_{uvi} = 1,\quad \forall (u,v)\in P_i, \forall P_i\in \paths(Q).
\end{equation}

Additionally, in order to further optimize the choice of the antichain $Q$, we consider the following notion. Given a weight function $a:E\to\mathbb{N}$, a \emph{maximum weight antichain} is an antichain $Q$ maximizing $\sum_{e\in Q}a(e)$~\cite{rival2012graphs}. Maximum weight antichains can be found by a reduction to minimum flows, with demands on the edges given by their weights, in runtime $O(m\cdot\sum_{e\in E}a(e))\subseteq O(n\cdot m^2)$, followed by a depth-first search through the graph. Maximum weight antichains are cut-sets with edges whose minimum flow is equal to their demand~\cite{rival2012graphs}. In order to heuristically fix as many $x_{uvi}$ variables as possible, we set the weight $a(e)$ of each edge $e$ to be the length of the longest safe path traversing $e$. The overall runtime of this preprocessing is thus $O(m\cdot\sum_{e\in E}a(e))$. Note that this approach does not yield the maximum total length of independent safe paths, we describe an approach on how to find them in \Cref{apx:optimal-safe-paths}.

As further optimization, note that symmetries are well known to slow down ILP solvers. In the previous ILP formulations, the paths can be arbitrarily permuted, which we can mitigate with the following constraints: $w_{i+1} \leq w_i, \forall i \in \{|Q|+1,\dots,k-1\}.$

Note that we exclude the first $|Q|$ weights as they have already been fixed to use the longest safe paths crossing the antichain $Q$. This approach does not introduce more variables but adds constraints. Moreover, it can be effective in case the paths are forced to have widely different weights, which is the case in practical inputs for the RNA reconstruction problem~\cite{li2014rnaseqreadsimulator}.
We show the complete optimized ILP in \Cref{apx:optimized-ilp}.

Moreover, in our optimized ILPs we have an additional check on whether the lower bound $|Q|$ equals the size of the heuristic flow decomposition (which we need in any case to initially compute the safe paths, as mentioned above). If this holds, then the heuristic flow decomposition has minimum size, and we directly report it, without running the ILP.

\begin{remark}
\label{rem:safety-for-subpaths}
Since any flow decomposition satisfying the subpath constraints is (trivially) a flow decomposition, it follows that safe paths for all flow decompositions are also safe for flow decompositions with subpath constraints. However, as for the inexact problem variant, they do not capture \emph{all} the safe paths for MFDSC. Nevertheless, we can still use such paths with positive excess flow as described above to again reduce the search space of the ILP solver for MFDSC.
\end{remark}

\begin{remark}
\label{rem:safety-for-variants}
Since in the MIFD problem we are given intervals, instead of a single value, for every edge, the excess flow characterization needs to be adapted. In fact, it is an open problem how to find all the safe paths for all feasible inexact flow decompositions (of any size). However, in this paper we consider a ``conservative'' adaptation of the excess flow notion (\emph{inexact excess flow}), and prove that paths having positive inexact flow are safe for all inexact flow decompositions. As such, we can use them in the same manner as described above to reduce the search space of the ILP solver for MIFD.

\begin{lemma}[Inexact excess flow]
    \label{lem:inexact-excess-flow}
    For a path $P = (e_1,\dots,e_{|P|})$ with $e_i = (u_i, u_{i+1})$ define its \emph{inexact excess flow} as
    \[ f_P \coloneqq L_{e_1} - \sum_{\substack{(u_i, v)\in E\\i\in\{2,\dots,|P|-1\},v\neq u_{i+1}}} R_{u_iv}. \]
    A path $P$ is safe if $f_P > 0$.
\end{lemma}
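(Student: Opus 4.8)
The plan is to adapt the telescoping argument behind the exact excess-flow characterization of Khan et al.~to the interval setting, replacing the two exact quantities used there by their worst-case interval endpoints. Concretely, I would fix an arbitrary feasible inexact flow decomposition $\pathdecomp$ --- a family of $s$-$t$ paths $P_1,\dots,P_k$ with positive weights satisfying $\sum_{i:\,(u,v)\in P_i} w_i \in [L_{uv},R_{uv}]$ for every edge $(u,v)$ --- and show that $f_P>0$ forces some $P_i$ to contain all of $P$ as a subpath, which is precisely safety.

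First I would introduce, for each prefix $P[1..j]$ of $P$, the quantity $W_j \coloneqq \sum_{i:\ P[1..j]\text{ is a subpath of } P_i} w_i$, i.e.\ the total weight of decomposition paths that traverse the first $j$ edges of $P$. The goal is to prove $W_{|P|}>0$: since all weights are positive, this forces at least one $P_i$ to contain $P[1..|P|]=P$, giving safety. The base of the chain is immediate, since every decomposition path through $e_1$ contains $P[1..1]$, so $W_1$ equals the total weight on $e_1$, which by feasibility is at least $L_{e_1}$.

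Next I would establish the inductive step at each internal node $u_i$ of $P$. Every path counted in $W_{i-1}$ reaches $u_i$ and leaves it along exactly one out-edge; those leaving along the path edge $e_i=(u_i,u_{i+1})$ are exactly the paths counted in $W_i$, and the rest leave along some non-path edge $(u_i,v)$ with $v\neq u_{i+1}$. Since the weight of $W_{i-1}$-paths using $(u_i,v)$ is a subset of, hence at most, the total weight on $(u_i,v)$, which feasibility bounds above by $R_{u_iv}$, I obtain $W_i \geq W_{i-1} - \sum_{(u_i,v)\in E,\, v\neq u_{i+1}} R_{u_iv}$. Telescoping this over all internal nodes of $P$ and substituting the base case yields $W_{|P|} \geq L_{e_1} - \sum_{(u_i,v)\in E,\, v\neq u_{i+1}} R_{u_iv} = f_P > 0$, which completes the argument.

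The proof is short, so I do not expect a hard computational obstacle; the only genuine subtlety is fixing the direction of the two relaxations, and it is exactly this point that explains why the lemma is a one-directional ``if'' rather than the ``iff'' of the exact case. One must lower-bound the weight guaranteed to enter the path (hence $L_{e_1}$) and upper-bound the weight that may leak out of it (hence the $R_{u_iv}$), so that $W_{|P|}$ is bounded below by the worst case over all feasible decompositions; any other choice of endpoints would break the inequality chain. Because these endpoints are generally loose, a path can be safe for every inexact decomposition while still having $f_P\le 0$, consistent with the observation that fully characterizing the safe paths for MIFD remains open. I would also note the harmless boundary facts --- each internal node of $P$ differs from $t$, so the relevant $s$-$t$ paths really do continue past $u_i$ --- but these require no real work.
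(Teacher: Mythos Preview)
Your argument is correct and is essentially the same as the paper's: both bound the weight entering $P$ below by $L_{e_1}$ and the total weight that can leak out at internal nodes above by the sum of the $R_{u_iv}$, concluding that positive weight survives to the end of $P$. The only cosmetic difference is that you present this as a direct telescoping inequality on the quantities $W_j$, whereas the paper phrases the identical computation as a short contradiction (if no path through $e_1$ contained all of $P$, the total weight on $e_1$ would be strictly less than $L_{e_1}$).
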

Note that we lose the \textit{if and only if} property by assuming the worst case of only sending the lowest possible amount of flow through the first edge and of removing the largest possible amount of flow through the outgoing edges of the path. This means that inexact excess flows will not necessarily find all safe paths.

\end{remark}

\section{Y-to-V reduction for subpath constraints and inexact flows}
\label{sec:y-to-v}

Koster et al.~\cite[Lemma 4.1]{kloster2018practical} used an optimization to reduce the graph size in a pre-processing step by suitably contracting all edges entering nodes with in-degree 1, and respectively, exiting from nodes with out-degree 1, as we review below. This reduction has sometimes been used, under the name ``Y-to-V'', to simplify the input graphs to other problems, see e.g.~\cite{tomescu2017safe}. Note that the ILP solver by Dias et al.~\cite{dias2022fast} did not include pre-processing step. In this section, we show how to extent this pre-processing step for the two problem variants MIFD and MFDSC (missing proofs are in \Cref{apx:additional-proofs}).

The Y-to-V reduction for MFD works the following way. Consider the graph $(V,E)$ and let $v\in V$ be a node of in-degree $1$, and let $u\in V$ be the unique node with $(u,v) \in E$. Let $v$ have out-degree $\ell$ and let $w_i \in V$, $i=1,\dots,\ell$ be all nodes with $(v, w_i)\in E$. The reduction creates a new graph $(V,E')$ with edges $(u,w_i)$ and flow $f':E'\to\N$ with $f'(u,w_i) = f(v,w_i)$ for $i=1,\dots,\ell$. This process can be repeated until no nodes of in-degree $1$ exist anymore, while all other edges are copied to the new graph. Analogously this can be done for nodes of out-degree $1$.

For the MFDSC problem (with subpath constraints), note that subpaths are given with respect to the original graph, and hence we need to modify them for the Y-to-V contracted graph. The induced subgraphs of nodes of in-degree (resp. out-degree) $1$ define a forest of out-trees (resp. in-trees). Subpaths can potentially pass through, begin in or end in such trees.

If a path completely passes through a tree, we can merely merge the edges of the tree in the path to the contracted edge. In that case, the subpath which intersects with the tree gets translated a single new edge. If a path stops inside an out-tree or begins inside an in-tree, we can not do that because we do not know which leaf the path will cross in the decomposition. In that case, the subpath which intersects with the tree gets translated to several new, parallel edges.

We first contract all out-trees and describe how to reduce the path to the contracted graph. We can follow up by contracting all in-trees analogously. We can partition the graph uniquely in maximal-size out-trees, such that no adjacent out-trees can be merged to a new out-tree. An out-tree in this partition consists of a single edge if and only if this edge will not be contracted in the Y-to-V reduction. Let the subpath $R_j\in \subpaths$ intersect the out-trees $T_1,\dots,T_{\ell_j}$, ending at node $u\in T_{\ell_j}$. We can contract the path to obtain $R'_j$ in the following way. Let the set of leaf nodes of $T_{\ell_j}$ that can be reached from $u$ be denoted by $L(T_{\ell_j}, u)$. For every tree $T_1,\dots,T_{\ell_j-1}$, we add an edge from the root of the tree to the leaf the path is crossing to $R'_j$. For $T_{\ell_j}$ we add the parallel edges $(\text{root}(T_{\ell_j}), v)$ for all $v\in L(T_{\ell_j}, u)$. That means $R'_j$ is a path of length $\ell_j-1$ followed by some parallel edges. Let the obtained set of ``subpaths'' $R'_j$ be $\subpaths'$. We change the constraints \labelcref{eq:path-contains-subpath,eq:subpath-in-atleast-one-path} to:
\begin{align}
    \sum_{(u,v)\in R'_j} x_{uvi} \geq |\ell_j|r_{ij},\quad &\forall i\in\{1,\dots,k\}, \forall R'_j\in\subpaths', \label{eq:path-contains-subpath-contracted} \\
    \sum_{i\in\{1,\dots,k\}} r_{ij} \geq 1,\quad & \forall R'_j\in\subpaths'. \label{eq:subpath-in-atleast-one-path-contracted}
\end{align}
Note that only the set $\subpaths$ and the integers $|R_j|$ change compared to the constraints \labelcref{eq:path-contains-subpath,eq:subpath-in-atleast-one-path}.

\begin{lemma}
    The construction is correct, i.e. the constraints \labelcref{eq:path-contains-subpath-contracted,eq:subpath-in-atleast-one-path-contracted} are true in the original graph if and only if the constraints \labelcref{eq:path-contains-subpath,eq:subpath-in-atleast-one-path} are true in the Y-to-V contracted graph.
    \label{lem:y-to-v-subpaths}
\end{lemma}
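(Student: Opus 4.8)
The plan is to prove the biconditional by showing that a satisfying assignment of the ILP variables $x_{uvi}$ on the Y-to-V contracted graph corresponds, via the well-defined inverse of the edge-contraction operation, to a satisfying assignment on the original graph, and that under this correspondence the subpath-containment constraints \labelcref{eq:path-contains-subpath-contracted,eq:subpath-in-atleast-one-path-contracted} hold exactly when \labelcref{eq:path-contains-subpath,eq:subpath-in-atleast-one-path} hold. Since the second pair of constraints in each formulation (the ``at least one path covers $R_j$'' condition \eqref{eq:subpath-in-atleast-one-path} versus \eqref{eq:subpath-in-atleast-one-path-contracted}) only quantifies over the $r_{ij}$ variables, which are indexed identically before and after contraction, the crux is really the first constraint, relating $\sum_{(u,v)\in R_j} x_{uvi}$ in the original graph to $\sum_{(u,v)\in R'_j} x_{uvi}$ in the contracted graph. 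I would carry out the argument one out-tree at a time, exploiting the fact that the reduction decomposes the graph into maximal out-trees that are contracted independently, and then appeal to symmetry for the in-tree case.

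The key structural observation to establish first is the path correspondence: a single $s$-$t$ path $P_i$ in the contracted graph lifts back to a unique $s$-$t$ path in the original graph. This is because contracting the edges entering an in-degree-$1$ node (or leaving an out-degree-$1$ node) is reversible on the level of individual paths — once a path enters an out-tree through its root, the contracted edge $(\text{root}(T), v)$ uniquely determines the leaf $v$ it exits from, and the uncontracted tree-path from root to that leaf is the unique way to realize it. I would make this precise and note that it preserves the flow superposition constraints \eqref{eq:ILP-flow-superposition} because $f'(u,w_i)=f(v,w_i)$ by construction, so feasibility of the decomposition is unaffected and we may freely pass between the two graphs.

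Given this correspondence, I would then analyze how $R_j$ restricted to a single out-tree $T_m$ maps to $R'_j$. For the ``pass-through'' trees $T_1,\dots,T_{\ell_j-1}$, the subpath enters at the root and exits at a determined leaf, so in the contracted graph this entire tree-crossing becomes a single edge, and path $P_i$ traverses that contracted edge if and only if in the original graph $P_i$ traverses the full root-to-leaf segment of $T_m$ that $R_j$ uses; hence each of these $\ell_j-1$ trees contributes exactly one unit to $\sum_{(u,v)\in R'_j}x_{uvi}$ precisely when $P_i$ contains the corresponding portion of $R_j$. The genuinely delicate case is the final tree $T_{\ell_j}$, where $R_j$ ends at an internal node $u$ rather than at a leaf: here the contraction introduces the \emph{parallel} edges $(\text{root}(T_{\ell_j}),v)$ for $v\in L(T_{\ell_j},u)$, and I would argue that $P_i$ contains the original sub-portion of $R_j$ inside $T_{\ell_j}$ (the segment from $\text{root}(T_{\ell_j})$ up to $u$) if and only if the lifted path exits $T_{\ell_j}$ through some leaf in $L(T_{\ell_j},u)$, which in the contracted graph means $P_i$ uses exactly one of these parallel edges. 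Summing over all $\ell_j$ trees, $\sum_{(u,v)\in R'_j}x_{uvi}$ reaches its maximum value $\ell_j$ exactly when $P_i$ contains all of $R_j$, which is precisely the condition under which $\sum_{(u,v)\in R_j}x_{uvi}=|R_j|$; thus the threshold constraint with bound $\ell_j$ in the contracted graph fires $r_{ij}$ to $1$ under exactly the same circumstances as the bound $|R_j|$ in the original graph.

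I expect the main obstacle to be the bookkeeping around the last tree $T_{\ell_j}$ and the set $L(T_{\ell_j},u)$: one must verify that a lifted $s$-$t$ path passing through $\text{root}(T_{\ell_j})$ and eventually reaching $u$'s subtree corresponds bijectively to a choice of exactly one parallel edge, and in particular that using a parallel edge $(\text{root}(T_{\ell_j}),v)$ with $v\notin L(T_{\ell_j},u)$ must be excluded — i.e.\ that the reachability set is exactly the leaves downstream of $u$, no more and no less. A subtle point worth checking explicitly is that the ``if and only if'' survives the integer thresholding, namely that a partial traversal of $R_j$ (contributing strictly fewer than $\ell_j$ to the contracted sum but also strictly fewer than $|R_j|$ to the original sum) never spuriously satisfies one constraint while violating the other; this follows because each out-tree contributes at most one unit to the contracted sum and the contributions are independent across the maximal trees of the partition.
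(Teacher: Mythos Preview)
Your proposal is correct and follows essentially the same approach as the paper: treat the trees $T_1,\dots,T_{\ell_j-1}$ as contributing a single root-to-leaf contracted edge each, and handle $T_{\ell_j}$ via the parallel edges, observing that the path-shape of the $x_{uvi}$ variables forces exactly one parallel edge to be selected. The paper's proof is considerably terser than yours but makes one point explicit that you gloss over: $R_j$ may \emph{start} strictly inside $T_1$ (not at its root), and the paper disposes of this by noting that in an out-tree the backward extension to the root is unique, so one may assume w.l.o.g.\ that $R_j$ starts at $\text{root}(T_1)$; your phrase ``the subpath enters at the root'' is not literally true for $R_j$ itself, though your argument still goes through once you observe that any containing $s$-$t$ path $P_i$ must enter $T_1$ at its root.
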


The flow conservation is a necessary property for the Y-to-V reduction to work, since because of it, a single out-edge carries no further information. All paths of a flow decomposition entering a node of out-degree $1$ must continue through this edge, and will together decompose it (and similarly for nodes of in-degree $1$). In the case of MIFD, we do not require the input to have any conservation of flow, and unlike in MFD, valid inputs can also be infeasible to solve.

For nodes of in-degree $1$ (resp. out-degree $1$) we generalize flow conservation to the property of \emph{inexact flow conservation}, which nodes must have in order for its edges to be contracted:
\begin{equation}
    L_\text{in} \leq \sum_{(u,v)\in E} L_{uv} \leq \sum_{(u,v)\in E} R_{uv} \leq R_\text{in}, \label{eq:inexact-fc}
\end{equation}
where $[L_\text{in}, R_\text{in}]$ is the interval of the single in-edge of $u$ (respectively for the sum of all outgoing edges and the interval $[L_\text{out}, R_\text{out}]$ of the single out-edge of $u$). Note that the second inequality in \eqref{eq:inexact-fc} is always fulfilled since $L_{uv}\leq R_{uv}$. The Y-to-V reduction for MIFD works the same as the Y-to-V reduction for MFD, but defines intervals $[L'_{uw_i},R'_{uw_i}]$ (resp. $[L'_{w_iu},R'_{w_iu}]$) instead of flow values $f'(u,w_i)$ (resp. $f'(w_i,u)$) and is restricted on nodes of in-degree (resp. out-degree) $1$ fulfilling the inexact flow conservation.

\begin{lemma}
    The Y-to-V reduction for MIFD is correct.
    \label{lem:y-to-v-inexact}
\end{lemma}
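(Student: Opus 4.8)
The plan is to set up a weight-preserving bijection between the inexact flow decompositions of the original graph $G$ and those of the contracted graph $G'$, and then to iterate it. It suffices to analyse a single contraction of an internal node $v$ of in-degree $1$ whose incident edges satisfy inexact flow conservation \eqref{eq:inexact-fc}: the out-degree-$1$ case is symmetric under reversing all edges (which swaps the roles of $L$ and $R$), and the full reduction follows by induction on the number of contracted nodes, each step preserving the size $k$ and the weights of a decomposition. So let $(u,v)$ be the unique in-edge of $v$, with interval $[L_\text{in},R_\text{in}]$, let $(v,w_i)$, $i=1,\dots,\ell$, be its out-edges with intervals $[L_{vw_i},R_{vw_i}]$, and recall that in $G'$ these $\ell+1$ edges are replaced by edges $(u,w_i)$ with intervals $[L'_{uw_i},R'_{uw_i}]=[L_{vw_i},R_{vw_i}]$, leaving $v$ isolated. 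The bijection fixes every path avoiding $v$; a path through $v$ necessarily enters via $(u,v)$ and leaves via exactly one $(v,w_i)$ (since $v$ has in-degree $1$ and paths are simple), and is sent to the path obtained by replacing the two edges $(u,v),(v,w_i)$ by the single edge $(u,w_i)$. Since $G$ is acyclic we have $u\neq w_i$, so this is a genuine $s$-$t$ path, and the map is a bijection on $s$-$t$ paths. All weights are kept unchanged.

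For the forward direction ($G\to G'$), the only constraints that change are those on the new edges $(u,w_i)$, and the paths through $(u,w_i)$ in $G'$ are exactly the paths through $(v,w_i)$ in $G$; hence their total weight lies in $[L_{vw_i},R_{vw_i}]=[L'_{uw_i},R'_{uw_i}]$, so the image is a valid decomposition of $G'$. The reverse direction ($G'\to G$) is the heart of the argument, and is where inexact flow conservation is used. Every edge of $G$ other than $(u,v)$ inherits its constraint directly: each $(v,w_i)$ carries exactly the total weight on $(u,w_i)$ in $G'$, which lies in its interval. For the in-edge $(u,v)$, note that in $G'$ no path uses $(u,v)$ (it is absent there), so after lifting the paths through $(u,v)$ are precisely the expanded ones, each continuing along exactly one $(v,w_i)$; thus the total weight on $(u,v)$ equals $\sum_{i=1}^{\ell}\bigl(\text{total weight on }(u,w_i)\bigr)$. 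Since each summand lies in $[L_{vw_i},R_{vw_i}]$, the total lies in $\bigl[\sum_i L_{vw_i},\sum_i R_{vw_i}\bigr]\subseteq[L_\text{in},R_\text{in}]$, the final inclusion being precisely \eqref{eq:inexact-fc}. Hence the lifted decomposition is valid for $G$.

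Because the bijection preserves $k$ and the weights, $G$ admits a $k$-inexact flow decomposition if and only if $G'$ does; in particular the reduction preserves both feasibility and the optimal MIFD value. I expect the only genuine obstacle to be the in-edge check in the reverse direction: the MFD argument merely observes that flow conservation makes a single in-edge carry no further information given its out-edges, and the content of this lemma is that the weaker interval inequality \eqref{eq:inexact-fc} is exactly the right substitute. It is worth highlighting the asymmetry this exposes---dropping the in-edge constraint in $G'$ can never destroy feasibility, so the forward direction needs no hypothesis, whereas inexact flow conservation is what guarantees the dropped constraint is automatically recovered upon lifting, so that feasibility is never spuriously gained.
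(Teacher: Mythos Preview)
Your proof is correct and follows exactly the same approach as the paper's: establish a bijection between decompositions of $G$ and $G'$ by contracting/expanding the two-edge segment through $v$, observe that the forward direction is immediate, and for the reverse direction use the inexact flow conservation inequality \eqref{eq:inexact-fc} to certify that the total weight routed through the deleted in-edge $(u,v)$ lands in $[L_{\text{in}},R_{\text{in}}]$. The paper's own proof is considerably terser (two sentences per direction), so your version simply spells out the same argument in full, including the explicit interval calculation $\sum_i[L_{vw_i},R_{vw_i}]\subseteq[L_{\text{in}},R_{\text{in}}]$ that the paper leaves implicit.
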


\section{MFD with given weights}
\label{sec:wmfd}

%\alexcom{I find the name ``weighted MFD'' a bit confusing, because the problem is already weighted (i.e, we have flow values, path weights). Maybe ``MFD with given weights''? But you could sill use something like ``wMFD'' as acronym}

In this section we consider an MFD variant in which the path weights in the solution are restricted to belong to a given set $W = \{w_1, \dots, w_\ell\}$, see also \Cref{fig:weighted} in \Cref{apx:additional-proofs} for an example and note that not all weights in $W$ must be used by the solution paths. This problem was also defined by Kloster et al.~\cite[Sec.~6]{kloster2018practical}, and in this paper we call it \emph{MFD with given weights (MFDW)}. More formally, given a flow network $G=(V,E,f)$ and a set of weights $W = \{w_1,\dots,w_\ell\} \subseteq \mathbb{Z}^+$, find a minimum-sized set of $s$-$t$ paths $\paths = (P_1,\ldots,P_k)$ with associated weights $(\tilde{w}_1,\ldots,\tilde{w}_k)$, with each $\tilde{w}_i \in W$, such that $\sum_{i \in \{1,\dots,k\}} P_i(u,v)\tilde{w}_i = f_{uv}$, $\forall (u,v) \in E$.

This problem has a smaller search space than MFD, because the weights are restricted to $W$, and thus can potentially admit faster solvers. Moreover, if the weights of an optimal solution are already known, then an optimal solution to \IA is also an optimal solution to MFD, and this problem can be used as potentially faster solver. For instance, Kloster et al.~\cite{kloster2018practical} have observed that path weights can be found in the flow in graphs used for RNA sequence reconstruction. 

In the rest of this section we show that knowing such a set $W$, we are able to formulate a model with substantially fewer variables than the ILP model for MFD, potentially decreasing the runtime of the ILP solver by a substantial amount. This formulation also has the potential to be an alternative efficient heuristic algorithm for MFD, if $W$ is well chosen.

%\alexcom{Great motivation; I would also add that this can serve as a different greedy algorithm, and we will measure how well thsi greedy works in Sec. X}

Given the set of possible weights, our ILP model will able to answer the following question: In an optimal solution (that minimizes the number of used paths), how many paths of weight $w_i$ pass through an edge $e$? In other words, to decompose a flow $f$ using the weights in $W$, we aim to find flows $X_i$ such that $f = \sum_{i=1}^{|W|} w_iX_i$, where each $X_i(e)$ answers that question. The ILP model for MFDW is the following:
\begin{subequations}
\begin{align}
\allowdisplaybreaks
&\emph{Minimize} \quad k\coloneqq\sum_{(s,u)\in E} \sum_{i=1}^{|W|} x_{sui} \nonumber \\
&\emph{Subject to:} \nonumber \\
& f_{uv} = \sum_{i=1}^{|W|} w_i x_{uvi}, \quad && \forall (u,v)\in E,\\
& \sum_{(u,v)\in E} x_{uvi} - \sum_{(v,u)\in E} x_{vui} = 0, \quad &&\forall v\in V\setminus \{s,t\}, \forall i \in \{1,\dots,|W|\},\\
& x_{uvi} \in \mathbb{N}, \quad && \forall (u,v)\in E, \forall i \in \{1,\dots,|W|\}.
\end{align}
\end{subequations}

Note that the flow $X_i(u,v)=x_{uvi}$ can be decomposed trivially into weight $1$ paths. In addition, the product $x_{uvi}w_i$ does not require linearization due to $w_i$ being an input in this problem.

\begin{lemma} \label{lem:gwmfd_model_is_opt}
    The ILP model described in above solves MFDW optimally.
\end{lemma}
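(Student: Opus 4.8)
The plan is to show a bijection between feasible solutions of the ILP and feasible MFDW solutions that preserves the objective value $k$. The ILP minimizes $k = \sum_{(s,u)\in E}\sum_{i=1}^{|W|} x_{sui}$, which counts (weighted by multiplicity) the number of paths leaving the source. So the two directions are: (i) any MFDW solution gives a feasible ILP point with the same number of source-departing paths, and (ii) any feasible ILP point can be converted back into a valid MFDW path decomposition using the same number of $s$-$t$ paths. Proving both directions establishes that the optimum of the ILP equals the MFDW optimum and that an optimal ILP point yields an optimal MFDW solution.

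**First I would** handle direction (i). Given an MFDW solution $(P_1,\dots,P_k)$ with weights $\tilde w_1,\dots,\tilde w_k \in W$, for each weight value $w_i \in W$ define $x_{uvi} := \#\{j : \tilde w_j = w_i,\ (u,v)\in P_j\}$, i.e.\ the number of solution paths of weight exactly $w_i$ traversing edge $(u,v)$. Each such $x_{uvi}\in\N$. Since every path $P_j$ is an $s$-$t$ path, summing the path-indicator differences gives flow conservation at every internal node for each weight class separately, which is exactly constraint (b). The superposition constraint $f_{uv}=\sum_j \tilde w_j P_j(u,v)=\sum_{i} w_i\, x_{uvi}$ is constraint (a). Finally, each path contributes exactly one unit to $\sum_{(s,u)\in E} x_{sui'}$ for its weight class $i'$, so $\sum_{(s,u)\in E}\sum_i x_{sui}=k$, matching the objective.

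**The harder direction** is (ii): recovering paths from the flows $X_i$. The key observation, already noted in the excerpt, is that each $X_i$ (with $x_{uvi}\in\N$ and flow conservation at internal nodes) is an integral flow, hence decomposes into unit-weight $s$-$t$ paths — a standard flow-decomposition fact, and the number of such paths for weight class $i$ equals $\sum_{(s,u)\in E} x_{sui}$, the net flow out of $s$ in $X_i$. Assigning each of these unit paths the weight $w_i$ produces a collection of weighted $s$-$t$ paths; their superposition on edge $(u,v)$ is $\sum_i w_i x_{uvi}=f_{uv}$ by constraint (a), so this is a valid MFDW solution, and its total path count is $\sum_i \sum_{(s,u)\in E} x_{sui}=k$. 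I expect the main obstacle (such as it is) to be stating the integral-flow-decomposition step cleanly: one must confirm that flow conservation holds at $s$ and $t$ as well, which follows because conservation at all internal nodes forces the net out-flow at $s$ to equal the net in-flow at $t$; and one must be careful that $X_i(e)$ can exceed $1$, so a single weight class may contribute several parallel paths along an edge, which is exactly what the $\N$-valued (rather than binary) variables permit.

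**Combining** the two directions, feasible ILP points and MFDW solutions are in value-preserving correspondence, so minimizing $k$ in the ILP is equivalent to minimizing the number of paths in MFDW; an optimal ILP solution therefore yields, via the decomposition in (ii), an optimal MFDW solution, which is the claim of \Cref{lem:gwmfd_model_is_opt}.
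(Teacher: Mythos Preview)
Your proposal is correct and follows essentially the same approach as the paper: in both directions you set $x_{uvi}$ to count the paths of weight $w_i$ through $(u,v)$, and conversely decompose each integral flow $X_i$ into unit $s$-$t$ paths (the paper does this by an explicit greedy peel-off, you by appealing to standard flow decomposition) and assign weight $w_i$ to each, observing the total path count equals the objective. The arguments are the same in substance; your write-up is slightly more careful about the conservation-at-$s,t$ and multiplicity points, which is fine.
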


In \Cref{apx:additional-proofs} we further argue that MFDW is strongly NP-hard and that solving MFDW of $G$ with weights $\{2^i \mid i=0,\dots,\ceil{\log ||f||}\}$ optimally is a $\ceil{\log ||f||}+1$-approximation of MFD. Depending on the application, weight sets other than powers-of-two can give a better approximation or faster runtime. We can for example use powers of $2^i$ for any $i \in \mathbb{Z}^+$, which yields an approximation scheme with approximation factor $(2^i-1)\ceil{(\log||f||+1)/i}$. Increasing $i$, we expect a decreased runtime due to the reduced dimension of the ILP model, but the size of the decomposition increases.

\section{Experimental results}

\paragraph{Solvers and datasets.} % We denote by \OP the optimized ILP described in \Cref{sec:basic,sec:optimized-ilp} and by \IA the ILP described in \Cref{sec:wmfd}.
Our implementation of the ILPs uses the \Gurobi Python API under default settings and is available at \url{https://github.com/algbio/optimized-fd}. We compare the ILPs for MFD, MFDSC and MIFD also to the heuristic algorithms for them by~\cite{shao2017theory} (\CF), by~\cite{williams2021flow} (\CSH) and by~\cite{williams2019rna} (\IFD), respectively. For \IA we used the weight set $W = \{ 2^i \mid i \leq \log ||f|| \} \cup \{ f(e) \mid e\in E \}$. 

We also experimented with the implementation from the \CF solver of the standard greedy algorithm for MFD~\cite{VATINLEN20081390}, but since we observed that it performs worse than the \CF heuristic algorithm, we did not include it in the results. We also did not include in the results the \TB implementation by Kloster et al.~\cite{kloster2018practical} of the FPT algorithm for MFD, nor the \CS implementation by Williams et al.~\cite{williams2021flow} of the FPT algorithm for MFDSC since it was already observed in~\cite{dias2022fast} that they do not scale for minimum flow decomposition sizes larger than $6$ (which we also confirmed experimentally).

The runtimes of all ILPs (except \IA) include a linear scan in increasing order to find the smallest $k$ for which there is a flow decomposition in $k$ paths. As discussed in \Cref{sec:optimized-ilp}, the size of the maximum weight antichain $Q$ is a lower bound on $k$, thus the linear scans for \OP, \SO and \IO start at $|Q|$. We set up a time limit of up to 30 minutes for each input and each method. Our experiments were performed in a server running Linux with one AMD Ryzen Threadripper PRO 3975WX 32-core CPU with 512 GB RAM.

\begin{sloppypar}
For the MFD problem, we experimented with the datasets created by Shao and Kingsford~\cite{shao2017theory}. This dataset was created from human transcriptome using the quantification tool Salmon~\cite{patro2015salmon} and also contains datasets that were created using Flux-Simulator~\cite{griebel2012modelling} from human and mouse transcriptomes. First, a sample file from their dataset referred as \emph{SRR020730~Salmon} (corresponding to the file \texttt{rnaseq/sparse\_quant\_SRR020730.graph}), followed by the entire archive present in the directories  \texttt{rnaseq/human/} and \texttt{rnaseq/mouse/}, referred as \emph{Catfish Human} and \emph{Catfish Mouse}, respectively. From those two directories, only the graphs that had 50 nodes (prior to Y-to-V) were considered. For MFDSC, two datasets were used: an adaptation of the \emph{SRR020730~Salmon}, where subpaths were generated and another dataset \emph{SRR30790~StringTie}, which was created by Khan et al.~\cite{Khan:2022wo} from human RNA-seq reads SRR307903 assembled using the StringTie tool~\cite{pertea2015stringtie}). In both datasets, a limited of four subpaths per test instance were considered (due to performance limitations regarding previous tools used as benchmarks). Finally, for MIFD, we simulate interval flows similar to what was done in~\cite{williams2019rna} and, as also described in~\cite{williams2019rna}, it is possible that infeasible instances are generated in the process. In this case, we repeated the process until only feasible instances were generated.
\end{sloppypar}

\paragraph{Experimental setup.}
\label{sec:discussion}
To show the behaviour of the solvers for graphs of increasingly large MFD, we group the input graphs in ranges based on their MFD size, computed with \OP. If \OP does not finish within the time limit on an instance, we exclude it also from all other solvers -- note that only the heuristic solvers are faster than \OP, which are not optimal in general. For each solver, we report the average runtime per range of graphs (column \textbf{Avg.}), and the total runtime in that range on the graphs (column \textbf{Total}), both in seconds, and only for the graphs on which that solver finished within the time limit. Thus, these numbers are an underestimation of the time one would need to run the solver in practice, since it would run for at least 30 minutes on unsolved instances. In column \textbf{\#Solved} we list the number of instances on which the solver finished. We captured the runtime with the GNU \texttt{time} tool by separating the graphs into individual files and running the tools separately on each instance.

Column \bm{$\Delta(|\paths|)$} shows the approximation accuracy of the heuristic methods as follows: for each instance, we compute the difference between the number of paths reported by each formulation and the minimum number of paths in an MFD (computed with \OP). In each table cell, corresponding to a specific range of inputs, we list the sum of these differences, and in parentheses their averages.  

% If a tool fails to provide a solution for any instance of an input range in the timelimit of 30 mintues, we report a dash (``-'').

%\Cref{tab:FD_standard} illustrates the performance of \OP in the datasets.
%For three tools, \TB, \ST and \OP\alexcom{why recall again that we show it for three tools? maybe ``for all implementations''}, we show the average runtime per graph (\textbf{Avg.}), the total runtime (\textbf{Total}), both in seconds, and the number of solved instances (\textbf{Solved}).
%In this experiment, we set up a time limit of 30 minutes \alexcom{I think we mentioned this before; if we remove that, make sure this is here}. If a tool fails to provide a solution within such a time limit, we report a \emph{failed attempt} and the value is expressed in the tables as dash (``-''). \alexcom{we put dashes only if no instance of that size is solved}
%We captured the runtime with the GNU \texttt{time} tool by separating the graphs into several files and running the tools separately on all instances. \alexcom{not solved instances are not counted in the runtimes}

\paragraph{Discussion.} \Cref{tab:FD_standard} illustrates the performance of the solvers for the MFD problem. \OP improves over \ST in all ranges of MFD size. We generally observe that the larger the MFD size, the more significant the runtime improvement. For example, for Catfish~Human ($\min k \geq 16$), \OP is 49$\times$ faster than \ST on average and for Catfish~Mouse ($\min k \geq 21$), \OP is 70$\times$ faster than \ST on average.

As mentioned above, in the reported runtimes of the solvers we are not including instances which did not finish within the time limit. Thus, in practice \OP has even larger speedups compared to \ST, since \OP solves more instances within the time limit than \ST, for Catfish~Human and Mouse this happening already for $\min k \geq 6$. In SRR020730~Salmon \OP solves all instances, and on the other two datasets \OP solves 22 and 46 more instances, respectively.

\begin{table*}[!h]
\centering
\caption{\footnotesize Results for Problem MFD. \ST denotes the original ILP for MFD from Dias et al.~\cite{dias2022fast} (\Cref{sec:basic}), \OP denotes our optimized ILP described in \Cref{sec:optimized-ilp} and \Cref{sec:y-to-v}, and \IA the ILP from \Cref{sec:wmfd}. Runtimes are in seconds; a timeout of 30 minutes was used. The total number of instances in the datasets is mentioned in the rows "All", in parentheses. Since the heuristic solvers finish on all instances, we do not have \#Solver columns for them. %\alexcom{I would add a column with the number of instances in the datasets, because otherwise one doesn't know how to interpret the ``Solved'' columns}
\label{tab:FD_standard}}
\resizebox{\columnwidth}{!}{%
\begin{tabular}{l r rrr rrr rrr rrr}
\toprule
\midrule
& &\multicolumn{3}{c}{\CF heuristic} &\multicolumn{3}{c}{\IA heuristic} &\multicolumn{3}{c}{\ST}& \multicolumn{3}{c}{\OP}\\
\cmidrule(r){3-5} \cmidrule(r){6-8} \cmidrule(r){9-11} \cmidrule(r){12-14}
& min $k$ & Avg. & Total  & $\Delta(|\paths|)$ & Avg. & Total & $\Delta(|\paths|)$ & Avg. & Total  & \#Solved & Avg. & Total  & \#Solved\\
\midrule
\multirow{5}{*}{\rotatebox{90}{\shortstack{\textbf{SRR020730}\\\textbf{Salmon}}}}
& 1-5     & 0.00 &  0.51 &  1 (0.00) & 0.29 &  11065.22  & 13 (0.00) &  0.30 &  11751.06 &  38703 &  0.28 &  11006.17 &  38703 \\
& 6-10    & 0.00 &  2.52 &  14 (0.01) & 0.31 &  621.26    & 2 (0.00) &   0.48 &  958.88 &  1988 &  0.30 &  602.23 &  1988 \\
& 11-15  &  0.01 &  0.91 & 6 (0.04) & 0.33 &  52.78     & 0 (0.00) &  2.02 &  327.71 &  162 & 0.37 &  60.28 &  162 \\
& 16-20   & 0.01 &  0.11 & 0 (0.00) & 0.38 &  4.89      & 0 (0.00) &   15.97 &  191.68 &  12 &  0.92 &  11.96 &  13 \\
& 21-\emph{max}    &  0.01 &  0.06 & 6 (1.50) &  0.39      &  1.57 & 0 (0.00) &  17.56 &  52.68 &  3 & 16.05 &  64.21 &  4 \\
\cmidrule(r){2-14}
& All  (40870) & 0.00 &  4.11 & 27 (1.55) & 0.29 & 11745.72 & 15 (0.00) &	0.32 &	13282.01	& 40868	& 0.29 & 11744.85 & 40870 \\
\midrule
\multirow{5}{*}{\rotatebox{90}{\shortstack{\textbf{Catfish}\\\textbf{Human}}}}
& 1-5     & 0.00 &  0.87 & 3 (0.00) &  0.30 &  3045.73 & 299 (0.03) &  0.35 &  3631.62 &  10301 &  0.29 &  2985.03 &  10301 \\
& 6-10    & 0.00 &  2.14 & 61 (0.06) & 0.34 & 367.38 & 243 (0.22) &  3.67 &  3976.98 &  1083 &  1.56 &  1692.51 &  1085 \\
& 11-15    & 0.01 &  1.27 & 52 (0.24) &  0.40 &  86.64 & 30 (0.14) &  16.94 &  3558.35 &  210 &  2.24 &  476.53 &  213 \\
& 16-20    &  0.01 &  1.20 & 132 (1.08)  &  0.38 &  46.38 & 6 (0.05) &  43.56 &  4747.77 &  109 &  0.77 &  94.28 &  122 \\
& 21-\emph{max}   & 0.01 &  0.09 & 16 (2.67) &  0.40 &  2.38 & 0 (0.00) &  50.81 &  101.63 &  2 &  0.70 &  4.19 &  6 \\
\cmidrule(r){2-14}
& All (11730) & 0.00	& 5.57	& 264 (0.02) & 0.30	& 3548.51	& 578 (0.05) & 1.37 & 16016.35	& 11705 & 0.45 & 5252.54 & 11727 \\ 
\midrule
\multirow{5}{*}{\rotatebox{90}{\shortstack{\textbf{Catfish}\\\textbf{Mouse}}}}
& 1-5 &     0.00 &  0.51 & 2 (0.00) &  0.29 &  3532.97 & 280 (0.02) &  0.35 &  4202.92 &  12047 &  0.29 &  3471.02 &  12047 \\
& 6-10 &    0.00 &  5.67 & 139 (0.09) &  0.35 &  558.86 & 577 (0.36) &  3.29 &  5247.45 &  1594 &  1.16 &  1847.76 &  1597 \\
& 11-15 &   0.01 &  3.07 & 255 (0.78) &  0.55 &  180.58 & 171 (0.52) &  72.22 &  21234.09 &  294 &  23.08 &  7339.90 &  318 \\
& 16-20 &   0.01 &  0.74 & 69 (0.93) &  0.43 &  31.64 & 6 (0.08) &  58.18 &  3607.16 &  62 & 12.38 &  916.01 &  74 \\
& 21-\emph{max} &  0.01 &  0.44 & 57 (1.84) &  0.44 &  13.71 & 0 (0.00)  &  77.59 &  1862.17 &  24 &  0.85 &  26.43 &  31 \\
\cmidrule(r){2-14}
& All (14079) &  0.00	& 10.43	& 522 (0.04) &	0.31 &	4317.76	& 1034 (0.07) &	2.58 & 36153.79 &	14021 & 0.97 &	13601.12 & 14067 \\
% \midrule
% \multirow{2}{*}{\rotatebox{90}{\shortstack{\textbf{Ref.}\\\textbf{Sim}}}}
% & 16-20 (100) &   0.01 &  1.00 & 0 (0.00) & 0.53 &  52.89 & 0 (0.00) & 72.25 &  6141.09 &  85 &   0.34 &  33.92 &  100  \\
% \\
\midrule
\bottomrule
\end{tabular}
}
\end{table*}

Among the two heuristic solvers, \CF finishes on all instances in a few seconds in total, whereas \IA is slower than \CF, but still running in under one second on average per graph and faster than \OP in almost all cases. When comparing the approximation accuracy, note that the instances of practical interest are those for which the MFD size is more than 10, since for smaller sizes \OP already returns an optimal solution seconds in under 2 on average. On such instances, \IA is more accurate than \CF, and interestingly, the larger the MFD size, the more accurate \OP becomes (for $\min k \geq 21$ being fully accurate on all datasets). One reason might be that in more complex graphs, the weights of an optimal MFD  appear among the flow values of the edges, which we added to the set $W$.

% In \Cref{tab:FD_approx}, we illustrate the performance of \IA, comparing it to \CF, \GR, and to \OP as a baseline exact solution. Both \CF and \GR are the implementations done by \cite{shao2017theory}, with the former using the latter as a calculation step. For \IA, . %For each implementation, we report the average runtime per range of graphs (\textbf{Avg.}), the total runtime (\textbf{Total}), both in seconds. Column \bm{$|P|^{+}$} shows the approximation accuracy of the greedy methods, as follows: for each instance, we compute the difference between the number of paths reported by each formulation and the minimum number of paths in an MFD (computed with \OP). In each table cell, corresponding to a specific range of inputs, we list the sum of these differences, and in parentheses the average of these differences.

% The solvers  \IA, \CF and \GR are capable of solving the instances in less than 1 second on average, with the former two being significantly faster than \IA. However, the main drawback of these formulations is their approximation accuracy.

In \Cref{tab:MFDSC_results}, we show the results for the MFDSC problem. We observe a similar and even more pronounced trend as for the MFD solvers, where the larger the MFDSC size, the larger the improvement of our optimized ILP. For example, for dataset SRR020730~Salmon, \SO is 11$\times$, 58$\times$ and 54$\times$ faster than \SP for $\min k$ in ranges 11-15, 16-20, 21-\emph{max}, respectively. For both datasets, \SO manages to solve one more instance than \SP, and in SRR020730~Salmon it solves all instances. Moreover, the total running time of \CSH is more than half of the running time of \SO, while not giving optimal solutions (which are also less accurate on average than the heuristic solvers for MFD).

\begin{table*}[!h]
\centering
\caption{\footnotesize Results for Problem MFDSC. Runtimes are in seconds. A timeout of 30 minutes was used. \SP is the original ILP from from Dias et al.~\cite{dias2022fast}, and \SO is our optimized ILP as described in \Cref{sec:optimized-ilp} and \Cref{sec:y-to-v}. Note that the dataset SRR30790
StringTie has no instances where the optimum MFDSC solution size is larger than 15. %\alexcom{For the first dataset, use ``SRR020730 Salmon'', because ``Salmon'' doesn't mean the fish, but a tool.} 
\label{tab:MFDSC_results}}
\resizebox{0.8\columnwidth}{!}{%
\begin{tabular}{l r rrr rrr rrr}
\toprule
\midrule
&&\multicolumn{3}{c}{\CSH}& \multicolumn{3}{c}{\SP} &\multicolumn{3}{c}{\SO}\\
\cmidrule(r){3-5} \cmidrule(r){6-8} \cmidrule(r){9-11}
& min $k$ & Avg. & Total  & $\Delta(|\paths|)$ & Avg. & Total  & \#Solved & Avg. & Total & \#Solved \\
\midrule
\multirow{5}{*}{\rotatebox{90}{\shortstack{\textbf{SRR020730}\\\textbf{Salmon}}}}
& 1-5 & 0.19 &  676.67 &  109 (0.03) & 0.70 &  2543.97 &  3643 &  0.29 &  1057.00 &  3643 \\
& 6-10 & 0.19 &  377.83 &  411 (0.21) & 1.11 &  2214.29 &  1988 &   0.30 &  593.56 &  1988 \\
& 11-15 & 0.20 &  32.15 &  126 (0.78) & 3.79 &  613.24 &  162 &   0.34 &  55.41 &  162 \\
& 16-20  &  0.21 &   2.72 &  22 (1.69) &  70.52 &  846.29 &  12 & 1.12 &  14.54 &  13 \\
& 21-\emph{max} & 0.22 &  0.88 &  5 (1.25) & 93.28 &  373.12 &  4 &   1.71 &  6.82 &  4 \\
\cmidrule(r){2-11}
& All (5810) &  0.19 &  1090.25 & 673 (0.12) & 1.13 & 6590.91 & 5809 &  0.29 & 1727.33 & 5810 \\
\midrule
\multirow{5}{*}{\rotatebox{90}{\shortstack{\textbf{SRR30790}\\\textbf{StringTie}}}}
& 1-5 & 0.18 &  152.19 &  190 (0.23) &  0.69 &  568.43 &  828 & 0.31 &  255.99 &  828 \\
& 5-10 & 0.19 &  43.14 &  75 (0.32) & 1.24 &  287.75 &  232 & 0.33 &  76.59 &  232 \\
& 11-15 & 0.20 &  1.19 &  1 (0.17) & 2.31 &  11.56 &  5 &  0.90 &  5.42 &  6 \\
% & 15-20 & 0.00 &  0.00 &  0 (0.00) &  0.00 &  0.00 &  0 &  0.00 &  0.00 &  0 \\
% & 20-\emph{max} & 0.00 &  0.00 &  0 (0.00) &  0.00 &  0.00 &  0 &  0.00 &  0.00 &  0 \\
\cmidrule(r){2-11}
& All (1067) & 0.18	& 196.52	& 266 (0.25) & 0.81 & 867.74 & 1065 & 0.32	& 338.00 & 1066\\
\midrule
\bottomrule
\end{tabular}
}
\end{table*}

In \cref{tab:MIFD_results}, we show the results for MIFD problem and again observe a significant speedup compared to \IX on complex instances. For example, for $\min k$ in 11-15, \IO is up to 15$\times$ faster on SRR020730 Salmon, up to 123$\times$ faster on Catfish Human and up to 47$\times$ faster on Catfish Mouse. On the latter two datasets, \IO drastically runs in less than $1$ second on average. In addition, although \IFD has, on average, a better runtime, its difference compared to \IO is negligible.

% and solves $3$ instances more throughout the whole dataset. 

%Finally, see \Cref{apx:optimization-statistics} for some statistics on the contributions of our various optimizations.

\begin{table*}[!h]
\centering
\caption{\footnotesize Results for Problem MIFD. Runtimes are in seconds. A timeout of 30 minutes was used. \IX is the original ILP from from Dias et al.~\cite{dias2022fast}, and \IO is our optimized ILP as described in \Cref{sec:optimized-ilp} and \Cref{sec:y-to-v}. Note that the dataset Catfish Human has no instances where the optimum MIFD solution size is larger than 15 and the dataset Catfish Mouse has no instances where the optimum MIFD solution size is larger than 20.
\label{tab:MIFD_results}}
\resizebox{0.8\columnwidth}{!}{%
\begin{tabular}{l r rrr rrr rrr}
\toprule
\midrule
&&\multicolumn{3}{c}{\IFD heuristic}& \multicolumn{3}{c}{\IX} &\multicolumn{3}{c}{\IO}\\
\cmidrule(r){3-5} \cmidrule(r){6-8} \cmidrule(r){9-11}
& min $k$ & Avg. & Total  & $\Delta(|\paths|)$ & Avg. & Total  & \#Solved & Avg. & Total & \#Solved \\
\midrule
\multirow{5}{*}{\rotatebox{90}{\shortstack{\textbf{SRR020730}\\\textbf{Salmon}}}}
& 1-5 &  0.25 &  9561.44 &  2161 (0.06) &  0.58 &  22559.20 &  38706 &  0.57 &  22165.24 &  38706 \\
& 6-10 &  0.26 &  512.92 &  1697 (0.85) &  1.03 &  2044.12 &  1986 &  0.61 &  1219.30 &  1986 \\
& 11-15 &  0.29 &  46.58 &  302 (1.88) &  11.72 &  1887.70 &  161 &  0.70 &  112.70 &  161 \\
& 16-20 & 0.32 &  4.12 &  46 (3.54) &  28.18 &  309.97 &  11 &  3.25 &  42.21 &  13 \\
& 21-\emph{max} &  0.44 &  1.74 &  21 (5.25) &  66.30 &  198.90 &  3 &  10.94 &  43.75 &  4 \\
\cmidrule(r){2-11}
& All (40870) & 0.25 & 10126.80 & 4227 (0.10) & 0.66 & 26999.89 & 40867 & 0.58 & 23583.20 & 40870   \\
\midrule
\multirow{5}{*}{\rotatebox{90}{\shortstack{\textbf{Catfish}\\\textbf{Human}}}}
& 1-5 &  0.24 &  2652.76 &  957 (0.09) &   0.59 &  6398.51 &  10911 &  0.58 &  6318.75 &  10911 \\
& 6-10 &  0.26 &  40.52 &  198 (1.25) &  1.36 &  215.40 &  158 &  0.65 &  101.99 &  158 \\
& 11-15 &  0.24 &  0.98 &  14 (3.50) &  101.01 &  404.03 &  4 &  0.81 &  3.25 &  4 \\
\cmidrule(r){2-11}
& All (11073) & 0.24 & 2694.26 & 1169 (0.11) & 0.63 & 7017.94 & 11073 & 0.58 & 6423.99 & 11073 \\
\midrule
\multirow{5}{*}{\rotatebox{90}{\shortstack{\textbf{Catfish}\\\textbf{Mouse}}}}
& 1-5 &  0.24 &  3182.64 &  763 (0.06) &  0.58 &  7562.22 &  13009 &  0.58 &  7519.09 &  13009 \\
& 6-10 &   0.25 &  26.83 &  139 (1.31) &  1.47 &  155.55 &  106 &  0.63 &  67.15 &  106 \\
& 11-15 & 0.36 &  1.79 &  19 (3.80) &  45.35 &  226.77 &  5 &  0.94 &  4.72 &  5 \\
& 16-20 &  0.29 &  0.29 &  3 (3.00) &  0.00 &  0.00 &  0 &  0.91 &  0.91 &  1 \\
%& 21-\emph{max} &  0.00 &  0.00 &  0 &  0.00 &  0.00 &  0 &  0.00 &  0.00 &  0 \\
\cmidrule(r){2-11}
& All (13121) & 0.24 & 3211.55 & 924 (0.07) & 0.61 & 7944.54 & 13120 & 0.58 & 7591.87 & 13121 \\
\midrule
\bottomrule
\end{tabular}
}
\end{table*}

\section{Conclusion}
\label{sec:conclusion}

% The flow decomposition problem has been addressed under different circumstances in many fields, such as networking and bioinformatics. In the literature, the state-of-the-art algorithm was proposed by Kloster et al.~\cite{kloster2018practical}, which provided exact solutions using an FPT algorithm. In Dias et al.~\cite{dias2022fast}, a new formulation using ILP was proposed. While the former was a ground-breaking achievement, the latter showed that it could be scaled to graphs with a large solution size, while being flexible to be applied also to application-oriented MFD variants.

In this paper, we proposed optimizing the ILP formulations for MFD using the notion of safe paths for all flow decompositions. Since safe paths cannot be simply be removed from the graph, we observed that we can use a set of independent safe path to suitably fix a large number of ILP variables corresponding to their edges. Combined with the Y-to-V reduction, and the first usage of an antichain lower bound in a solver to detect the optimality of a heuristic solution, this resulted in a significantly faster MFD solver (up to 91$\times$ on the harder instances). We also developed an ILP that can work as a heuristic for MFD, running in under 1 second on average, and more accurate than the state-of-the-art heuristic \CF on graphs of practical interest. We also showed that these optimizations can be applied to two MFD variants of practical interest (by also adapting the Y-to-V reduction the first time for these variants), resulting in even bigger speedups.

% a series of efficient improvements to the existent ILP formulation. First, we proposed the usage of a maximum antichain of safe paths in a pre-processing stage to preemptively decide which edges should be present in different solution paths. Our results show that such an approach drastically decreased the ILP runtime. Additionally, we show that the combination of safety and ILP formulation results in a solver capable of scaling to large values of $k$, which can be used for more complex datasets in real applications. Last, we also provided an ILP formulation for MFD with given weights (\IA). This resulted in an approximating MFD solver that, although lacking a guarantee of optimality, can solve large graphs within seconds. We showed experimentally that the size of its solution is close to the size of an optimal solution, outperforming the state-of-the-art solver \CF. Furthermore, it guarantees optimality in case the weights of an optimal solution are known.

Future research encompasses extending such improvements to the ILPs for MFD in general graphs with cycles~\cite{dias2022minimum}. Another point of improvement is a further search space reduction through further optimizations on the application of safety, for example, by answering the following question: Is it tractably possible to find subpaths that are part of all $\alpha$-approximations of MFD, where $\alpha$ can either be a constant or depend on the input size? Such \emph{$\alpha$-safe paths} (following the notion of $c$-essential vertices by Bumpus et al.~\cite{bumpus2022search}) would be longer the closer $\alpha$ is to $1$, while safe paths in this paper would refer to \emph{$\infty$-safe paths}.
For the problem variants MFDSC and MIFD, can we find all $\infty$-safe paths in polynomial time?

\bibliography{main}
\bibliographystyle{plain}

\clearpage
\appendix

\newpage

\section{Additional proofs and figures}
\label{apx:additional-proofs}

This section contains the proofs and figures missing from the main matter, and some additional results for the MFDW problem.

\begin{figure*}[h]
\centering
\begin{subfigure}[c]{0.45\linewidth}
\centering
\includegraphics[width=0.65\textwidth]{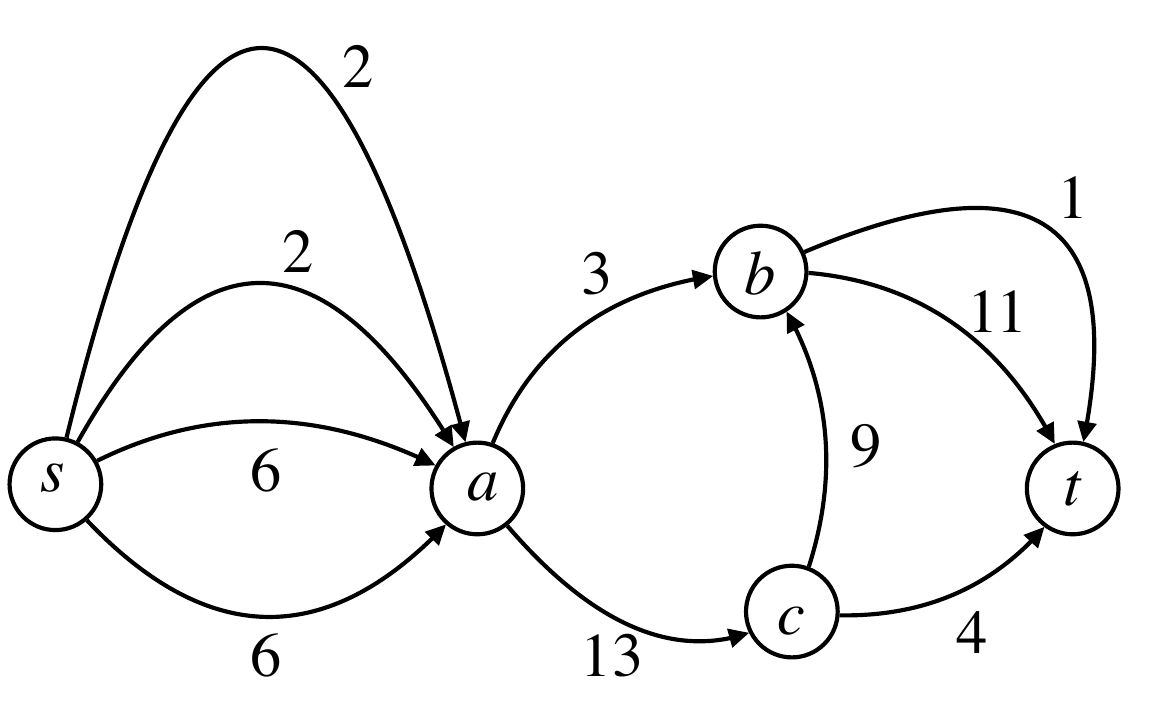}
\caption{A flow network \label{fig:og}}
\end{subfigure}
\begin{subfigure}[c]{0.45\linewidth}
\centering
\includegraphics[width=0.65\textwidth]{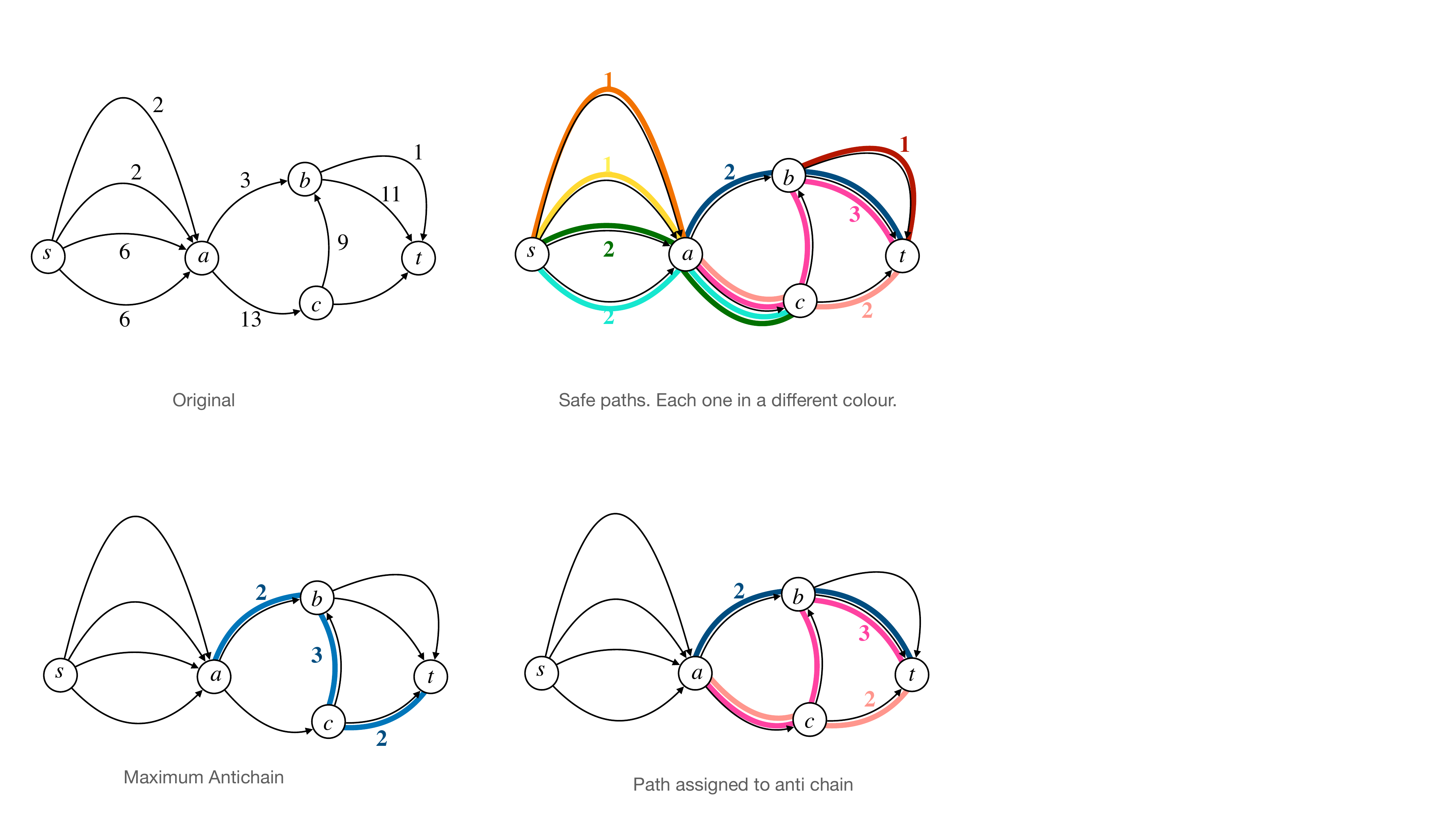}
\caption{All maximal safe paths with their respective lengths.
%Maximum weighted antichain (in blue), safe paths (in orange) and safe paths contain at least one edge from an antichain (in purple
\label{fig:safe}}
\end{subfigure}\\
\begin{subfigure}[c]{0.45\linewidth}
\centering
\includegraphics[width=0.65\textwidth]{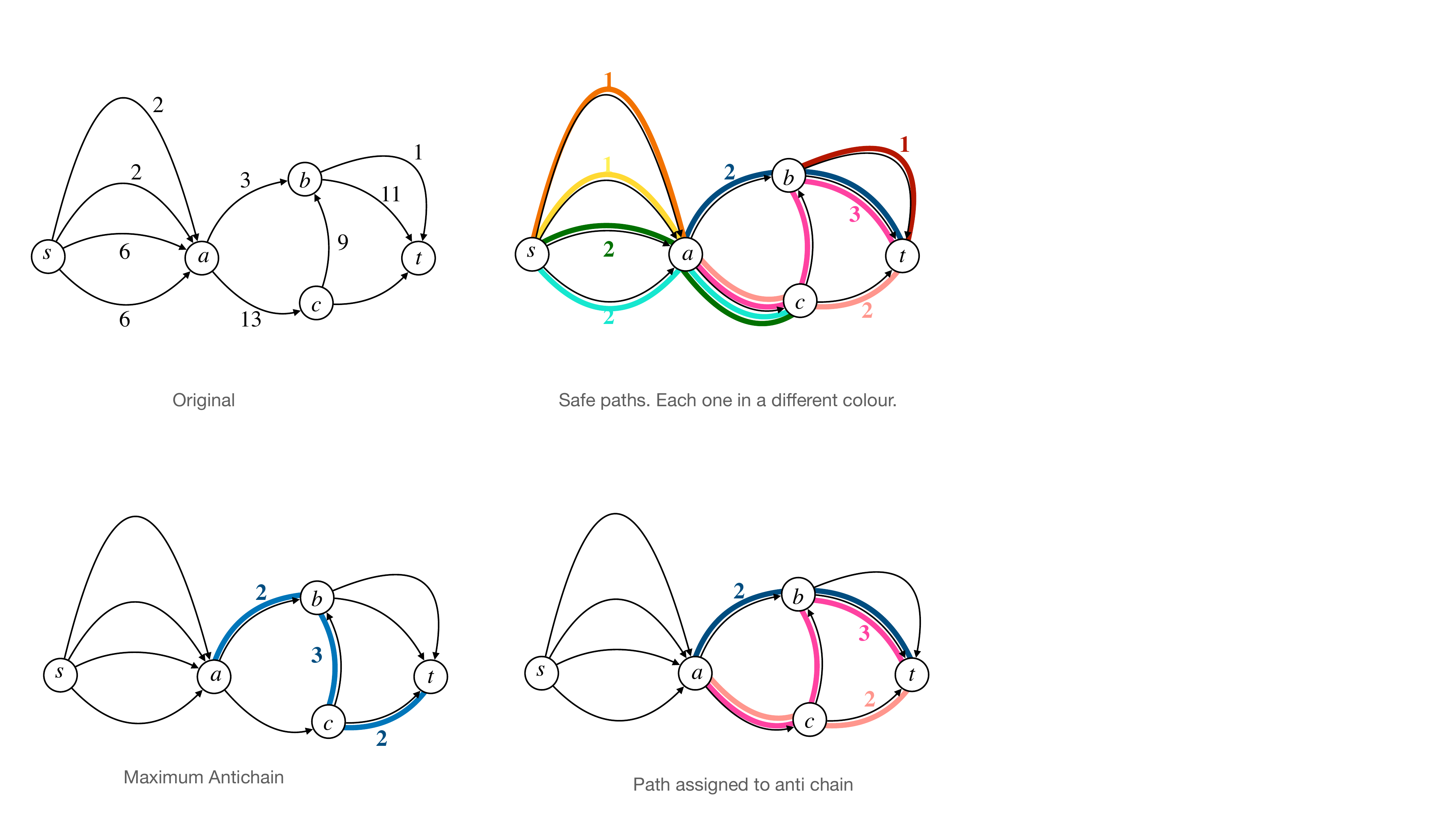}
\caption{Maximum Weight Antichain.\label{fig:anti}}
\end{subfigure}
\begin{subfigure}[c]{0.45\linewidth}
\centering
\includegraphics[width=0.65\textwidth]{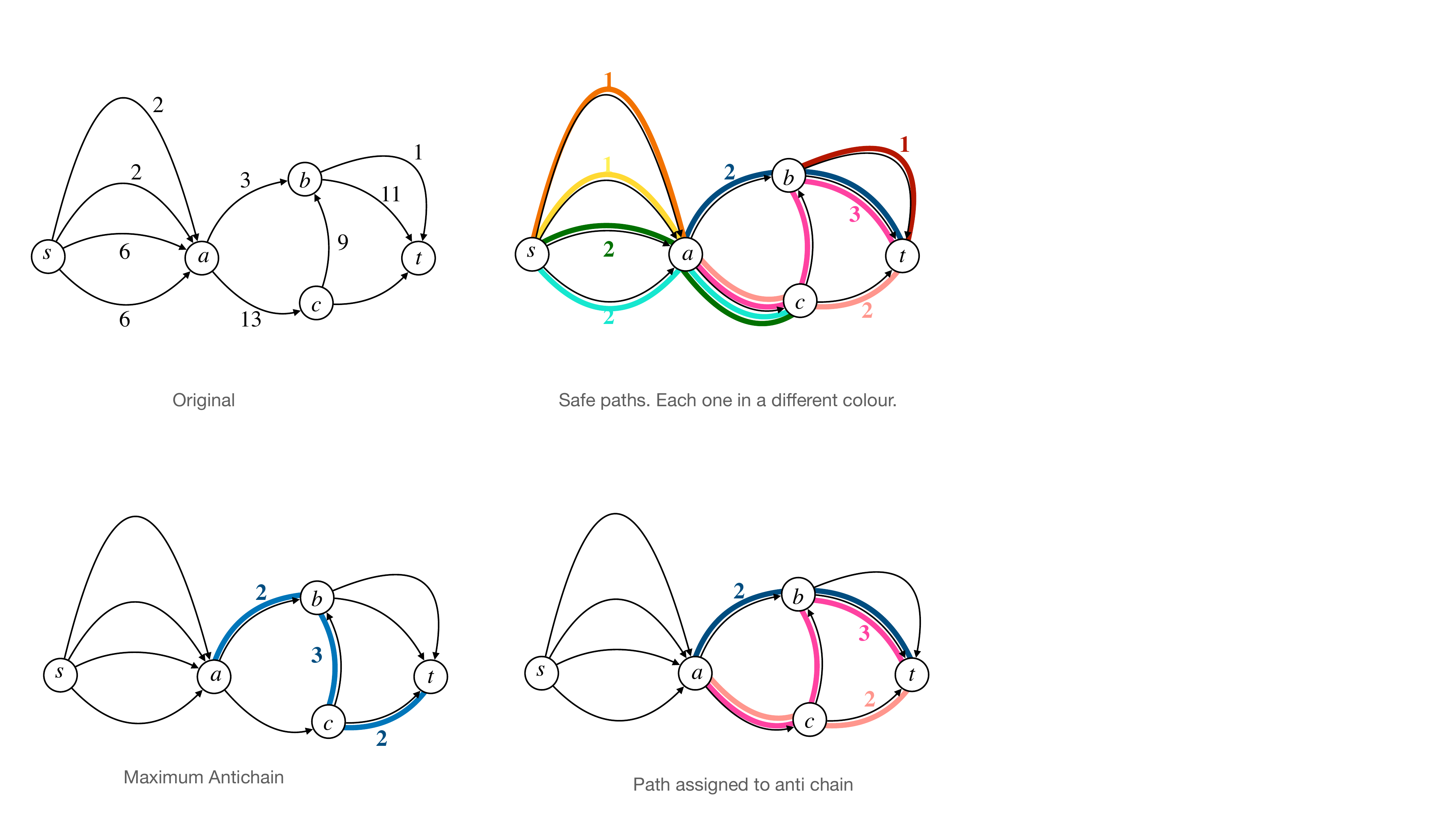}
\caption{Long collection of independent safe paths.
\label{fig:chosen}}
\end{subfigure}
\caption{Application of safe paths in introducing new constraints into previous ILP models. A flow network is displayed in~\Cref{fig:og}. From this network, all maximal safe paths (that cannot be extended left or right) can be calculated (displayed in different colours with their respective lengths in~\Cref{fig:safe}). We attach to every edge the maximum length of all safe paths crossing that edge as weight. By calculating the maximum weight antichain~(\Cref{fig:anti}), we obtain a set of pairwise independent safe paths~(\Cref{fig:chosen}). In the ILP formulation, we set to $1$ the $x_{uvi}$ variables of each edge $(u,v)$ in the $i$-th safe path of the maximum weight antichain.\label{fig:safe-paths}}
%Example of a flow network in \eqref{fig:MFD2} and safe paths and maximum weighted antichain in \eqref{fig:safeanti}. All safe paths (either in orange or purple) that have at least one blue (meaning that has that path has at least one edge as part of the antichan) can be used as an extra constraints, speeding up the ILP model. Such paths are highlighted in purple and their weight can be used as well as starting value for the weight of the corresponding $k$ path. 
%\alexcom{Can we make this caption (and maybe figure) clearer? This is one of the key techniques of the paper, so we should make sure it's properly explained. The figure can also be full page width for clarity (use ``figure*'' for this, I think). You could also make more subfigures showing the different steps (e.g. maximal safe paths, the weighted DAG in which we find a max weight edge antichain, and the final set of maximal safe paths corresponding to the max weight antichain)}
%\textcolor{blue}{Should I add the constraints as well. Mathematically, speaking.}}\andi{I changed the text a bit and added the constraint into the text, what do you think?}
\end{figure*}

\begin{figure}[ht]
\centering
\includegraphics[width=0.33\linewidth]{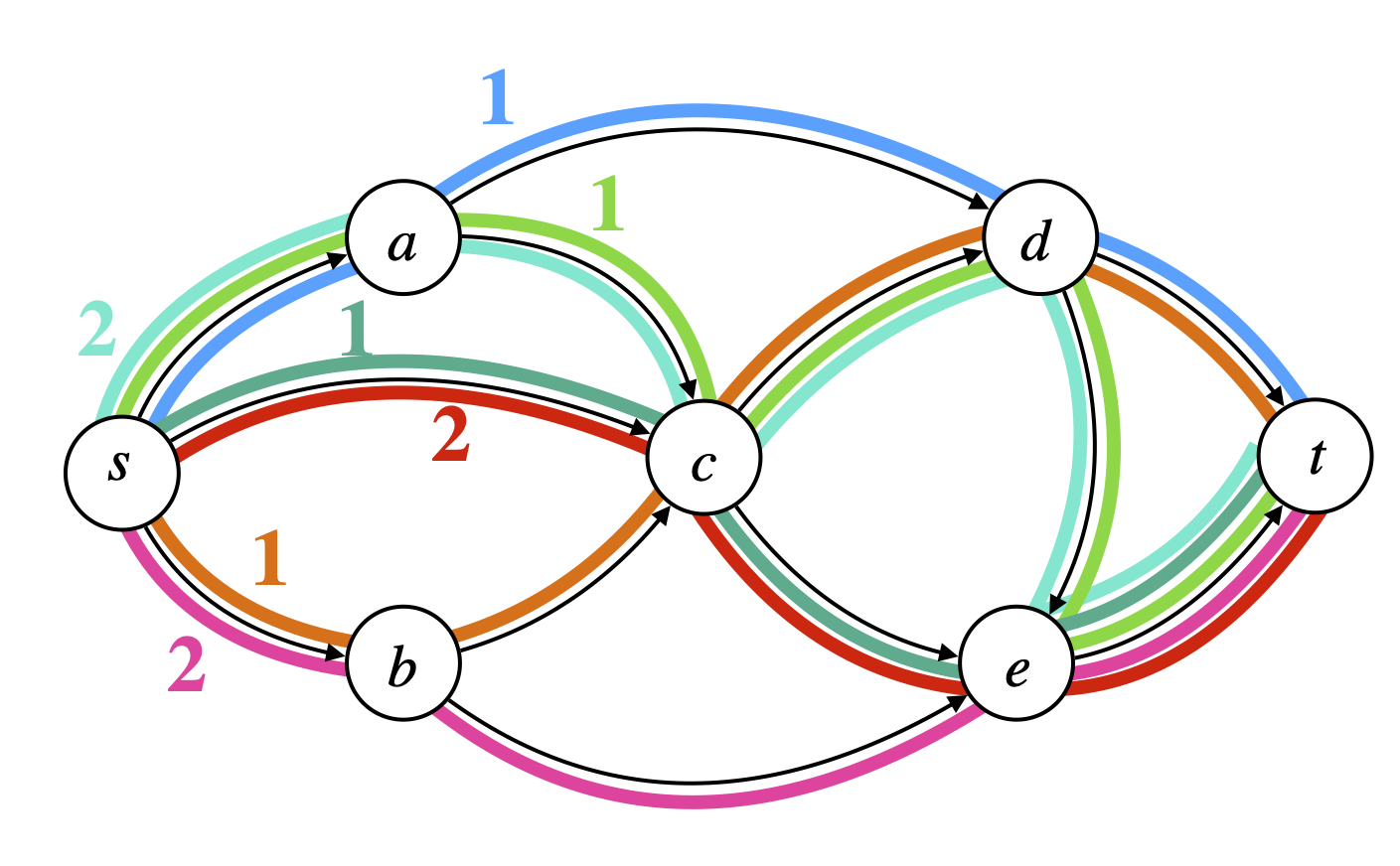}
\caption{Another flow decomposition of the flow network displayed in Fig.~\ref{fig:MFD}. In this scenario, all weights are already known and limited to only power of 2 (1, 2, 4, 8) values.\label{fig:weighted}}
\end{figure}

\begin{proof}[Proof of \Cref{lem:inexact-excess-flow}]
    Let $f_P > 0$ and assume that no path of a flow decomposition $\paths$ of assigned weights $w\in{\mathbb{Z}^+}^{|\paths|}$ routed through edge $e_1$ passes through all of $P$. This means they all leave $P$ through edges $\text{OUT}\coloneqq \{ (u_i, v)\in E \mid i\in\{2,\dots,k\}, v\neq u_{i+1} \}$ carrying weight at most $\sum_{e\in\text{OUT}} R_e < L_{e_1}$. This contradicts the assumption that $\sum_{P_j\in\paths:P_j(e_1)=1} w_j \in [L_{e_1}, R_{e_1}]$.
\end{proof}

\begin{proof}[Proof of \Cref{lem:y-to-v-subpaths}]
    We analyze the possible routes a path $R_j$ can be extended to, such that it begins and stops at nodes that lie in between different out-trees of the graph partition.

    The path $R_j$ begins in the out-tree $T_1$, potentially not at the root. Since $T_1$ is an out-tree, it can uniquely be extended to the root w.l.o.g., and we can assume that it starts at the root of $T_1$.

    As such, the path passes through the out-trees $T_1,\dots,T_{\ell_j-1}$ from the out-tree root to one of its leaves, which will be the root of the next tree. For these trees, there exists a unqiue edge in the contracted graph, that connects the root to that leaf.

    We now analyze the path reduction for the last out-tree $T_\ell$. $R'_j$ has been constructed to contain $\ell_j-1$ edges that form a path in the Y-to-V contracted graph, followed by parallel edges adjacent to that path. These parallel edges describe exactly all possibilities for $R_j$ to be extended to the right, such that it ends outside of the out-tree. Since they are parallel, and the variables $x_{uvi}$ are constrained to represent paths, exactly one of the parallel edges is forced to be used by constraint \eqref{eq:path-contains-subpath-contracted}.
\end{proof}

\begin{proof}[Proof of \Cref{lem:y-to-v-inexact}]
    Consider an inexact flow decomposition of the original graph. By uniquely adapting the paths to the contracted graph, they trivially define a feasible solution, as the same set of weighted paths contribute to the same intervals as before.
    
    Consider an inexact flow decomposition of the contracted graph. By uniquely adapting the paths to the original graph, they define a feasible solution, since the inexact flow conservation enforces the sum of the weights of paths passing through an edge with an interval that has been removed in the contracted graph to lie inside this interval.
\end{proof}

\begin{proof}[Proof of \Cref{lem:gwmfd_model_is_opt}]
    Let $\paths = \{P_1,\dots,P_k\}$ with weights in $W=\{w_1,\dots,w_\ell\}$ be a flow decomposition. Let $\paths(w_i) \subseteq \paths$ denote the multiset of paths of weight $w_i\in W$ and let $x_{uvi} = |\paths(w_i) \cap \{ \text{$s$-$t$ paths crossing $(u,v)$} \}|$. The variables $x_{uvi}$ define a feasible solution to the ILP model and we have $\sum\limits_{(s,u)\in E} \sum_{i=1}^{|W|} x_{sui} = k$ for the objective function.
    
    Given a feasible solution of the ILP model consisting of variables $x_{uvi}$, construct $s$-$t$ paths $P$ of weight $w_i\in W$ in the following way. Start at node $s$, choose an arbitrary neighbour $u$ of $s$ with $x_{sui} > 0$ and append the edge $(s,u)$ to $P$. Reduce $x_{sui}$ by one, and continue this step with $u$ instead of $s$ until the current node is $t$, which finishes the path $P$. Repeating this process until all the $x_{uvi}$ are zero yields a flow decomposition using $\sum\limits_{(s,u)\in E} \sum_{i=1}^{|W|} x_{sui}$ many paths.
\end{proof}

% An immediate application is when the weights of an optimal solution come from a known set $W$:

\begin{lemma} \label{lem:w-superset-opt}
    Let $W = \{w_1,\dots,w_\ell\}$ be a set of weights, and $\paths$ with associated weights $W' = \{w'_1,\dots,w'_k\}$ be an optimal solution of an MFD instance $G = (V,E,f)$. If $W' \subseteq W$, then solving \IA with weight set $W$ optimally solves MFD optimally.
\end{lemma}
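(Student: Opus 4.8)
The plan is to prove the equality of two optimal values: the minimum size $k^\ast$ of an MFD of $G$, and the minimum size $k_W$ of an MFDW solution of $G$ with weight set $W$. Since Lemma \ref{lem:gwmfd_model_is_opt} guarantees that the ILP returns an MFDW solution of minimum size $k_W$, it then suffices to show $k_W = k^\ast$ and to observe that any size-$k_W$ MFDW solution is itself a valid flow decomposition, hence an optimal MFD once the two values coincide.

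First I would argue $k_W \ge k^\ast$. Every feasible MFDW solution is a family of weighted $s$-$t$ paths whose superposition equals $f$ (constraint \eqref{eqn:superposition_of_flow}); that is, it is a flow decomposition, subject only to the extra requirement that each path weight lie in $W$. Hence the set of feasible MFDW solutions is contained in the set of all flow decompositions, and minimizing the number of paths over the (more constrained) MFDW feasibility set cannot produce a value below the unconstrained MFD optimum. This gives $k_W \ge k^\ast$.

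For the reverse inequality I would invoke the hypothesis $W' \subseteq W$. By assumption there is an optimal MFD $\paths$ of size $k^\ast$ whose weights form the set $W'$. Since $W' \subseteq W$, every weight used by $\paths$ belongs to $W$, so $\paths$ is itself a feasible MFDW solution with weight set $W$, whence $k_W \le |\paths| = k^\ast$. Combining the two bounds yields $k_W = k^\ast$. Consequently the size-$k_W$ solution returned by the ILP is a flow decomposition of size $k^\ast$, and is therefore an optimal MFD.

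The argument is essentially a comparison of feasible regions, so there is no genuinely hard step; the only point requiring care is keeping the two feasibility notions aligned. Specifically, one must note that an MFDW solution is literally a flow decomposition with an added weight-membership constraint (so MFDW feasibility implies MFD feasibility), and conversely that the hypothesis $W' \subseteq W$ is precisely what makes a fixed optimal MFD admissible to the MFDW problem (so MFD optimality transfers into an MFDW upper bound).
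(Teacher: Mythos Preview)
Your argument is correct and is essentially the same as the paper's: it compares the two optimal values via the inclusion of MFDW-feasible solutions in flow decompositions (giving $k_W \ge k^\ast$) and uses $W' \subseteq W$ to exhibit the optimal MFD as an MFDW-feasible solution (giving $k_W \le k^\ast$). The paper compresses this into a single sentence, but the content is identical.
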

\begin{proof}
    This follows from the fact that an optimal solution of MFDW always contains at least as many paths as an optimal solution of MFD.
\end{proof}

\begin{corollary}
    \IA is strongly NP-hard.
\end{corollary}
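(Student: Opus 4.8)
The plan is to reduce from MFD itself, which Hartman et al.~\cite{hartman2012split} proved to be strongly NP-hard, and in fact NP-hard already on instances whose flow values all lie in $\{1,2,4\}$. The reduction will be essentially trivial on the graph, and the entire argument hinges on choosing a weight set $W$ that is guaranteed to contain the weights of an optimal MFD solution, so that \Cref{lem:w-superset-opt} can be invoked without our having to know that solution in advance.

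First I would take an arbitrary MFD instance $G=(V,E,f)$ of this restricted form and build the \IA instance consisting of the same flow network $G$ together with the weight set $W = \{1,2,\dots,\|f\|\}$, where $\|f\| = \max_{e\in E} f(e)$. Since each path of any flow decomposition contributes its (positive integer) weight to every edge it traverses and all flow values are positive, every path weight is a positive integer bounded above by $\|f\|$. Hence the multiset $W'$ of weights of any flow decomposition of $f$---in particular of an optimal MFD solution---satisfies $W' \subseteq W$. By \Cref{lem:w-superset-opt}, solving \IA on this instance with weight set $W$ yields an optimal MFD solution: the optimal \IA value equals the optimal MFD value, and the returned $s$-$t$ paths decompose $f$ exactly. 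This shows that \IA is at least as hard as MFD on these restricted instances, so \IA is NP-hard.

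The remaining point, and the one requiring the most care, is the word \emph{strongly}. For the Hartman instances we have $\|f\| = 4$, so $W = \{1,2,3,4\}$ has constant size and the constructed \IA instance has all of its numerical parameters (the flow values and the candidate weights) bounded by the constant $4$. The reduction is plainly computable in polynomial time. Thus \IA is NP-hard even when every number in the input is bounded by a constant, and a fortiori by a polynomial in the input length, which is exactly strong NP-hardness. The main obstacle to watch is ensuring that $W$ contains the optimal weights without computing the optimal MFD solution; the bound of path weights by $\|f\|$ is precisely what resolves this, and it is also what keeps the numerical parameters constantly bounded, so that the reduction preserves strong (rather than merely weak) NP-hardness.
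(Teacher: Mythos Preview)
Your proof is correct and follows essentially the same approach as the paper: reduce from the Hartman et al.\ instances with flow values in $\{1,2,4\}$, take $W=\{1,\dots,c\}$ for the constant $c=\|f\|=4$, and invoke \Cref{lem:w-superset-opt}. You are somewhat more explicit than the paper in justifying why the optimal path weights must lie in $W$ (via the bound $w_i\le\|f\|$) and in spelling out why the reduction preserves \emph{strong} NP-hardness, but the argument is the same.
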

\begin{proof}
    Hartman et al. have shown that MFD is NP-hard on graphs using flow values only from $\{1,2,4\}$~\cite{hartman2012split}. \IA with $W = \{1,\dots,c\}$ solves MFD on instances where the flows are upper bounded by a constant $c$ by \Cref{lem:w-superset-opt}.
\end{proof}

When dealing with instances in which there is no known small set $W$ of which the path weights of an optimal solution of MFD come from, we can also use \IA to approximate the solution of MFD.

\begin{lemma}
    For an MFD instance $G = (V,E,f)$, solving \IA of $G$ with weights $\{2^j \mid j=0,\dots,\ceil{\log ||f||}\}$ optimally is a $\ceil{\log ||f||}+1$-approximation of MFD.

    More generally, solving \IA of $G$ with weights $\{{(2^i)}^j \mid j=0,\dots,\ceil{\log||f||/i}$ for some $i\in\mathbb{Z}^+$ is a $(2^i-1)\ceil{(\log||f||+1)/i}$-approximation of MFD.
\end{lemma}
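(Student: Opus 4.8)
The plan is to bound the optimum of \IA from above by the MFD optimum times the claimed factor; since every feasible \IA solution is in particular an (unrestricted) flow decomposition, an \emph{optimal} \IA solution is a valid MFD solution, so such an upper bound immediately certifies the approximation ratio (and by \Cref{lem:gwmfd_model_is_opt} the ILP indeed attains $\mathrm{OPT}_{\mathrm{MFDW}}$). Concretely, I would take an optimal MFD solution $\pathdecomp$ with $k = \mathrm{OPT}_{\mathrm{MFD}}$ paths $P_1,\dots,P_k$ and weights $w_1,\dots,w_k\in\mathbb{Z}^+$, and transform it into a feasible \IA solution that uses only the prescribed powers of two and has at most $k$ times the factor many paths.

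The central observation is that each weight satisfies $w_m \le ||f||$: for any edge $e$ on $P_m$, the decomposition gives $f(e) = \sum_{i:\,P_i(e)=1} w_i \ge w_m$ since all weights are positive, hence $w_m \le f(e) \le ||f||$. First I would write each $w_m$ in binary, $w_m = \sum_{j=0}^{\ceil{\log||f||}} b_{mj}\,2^j$ with $b_{mj}\in\{0,1\}$, which is possible precisely because $w_m \le ||f||$ means only the bits $2^0,\dots,2^{\ceil{\log||f||}}$ are needed, and these are exactly the weights available to \IA. I would then replace the single weighted path $(P_m,w_m)$ by one copy of $P_m$ carrying weight $2^j$ for each $j$ with $b_{mj}=1$. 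Summing over $m$, the resulting family still superposes to $f$ (on each edge the split weights re-add to $w_m$), uses only weights in $W$, and contains at most $\ceil{\log||f||}+1$ copies per original path. Hence $\mathrm{OPT}_{\mathrm{MFDW}} \le (\ceil{\log||f||}+1)\,k$, and the first claim follows after dividing by $k=\mathrm{OPT}_{\mathrm{MFD}}$.

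For the general statement I would run the same argument in base $2^i$: write $w_m$ in base $2^i$ with digits in $\{0,\dots,2^i-1\}$, and realize a digit $d$ at position $j$ by placing $d$ copies of $P_m$ with weight $(2^i)^j$. Since $(2^i)^{\ceil{(\log||f||+1)/i}}-1 \ge 2||f||-1 \ge ||f||$, at most $\ceil{(\log||f||+1)/i}$ digits are needed for any value up to $||f||$, and each digit costs at most $2^i-1$ copies, giving at most $(2^i-1)\ceil{(\log||f||+1)/i}$ copies per path and thus the stated ratio. The step I expect to be most delicate is the bookkeeping verifying that every power $(2^i)^j$ used actually lies in the declared set $W = \{(2^i)^j \mid j=0,\dots,\ceil{\log||f||/i}\}$; this reduces to $\ceil{(\log||f||+1)/i}-1 \le \ceil{\log||f||/i}$, which holds because $(\log||f||+1)/i = \log||f||/i + 1/i$ with $0 < 1/i \le 1$. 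Everything else, namely feasibility of the constructed \IA solution and the final comparison with $\mathrm{OPT}_{\mathrm{MFD}}$, is routine.
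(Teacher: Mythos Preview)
Your proof is correct and follows essentially the same approach as the paper: take an optimal MFD, expand each weight in base $2$ (resp.\ base $2^i$), and replace each path by the corresponding copies with power-of-two weights. You are in fact a bit more careful than the paper, explicitly justifying $w_m\le\|f\|$ and checking that every power used lies in the declared set $W$, but the core argument is identical.
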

\begin{proof}
Let $\paths = \{P_1,\dots,P_k\}$ with associated weights $\{w_1,\dots,w_k\}$ be an optimal solution to $G$ for MFD and let $\paths_w$ be an optimal solution for \IA. We can construct a feasible solution for \IA using $\paths$ by copying every $P_i \in \paths$ for every positive $j$-th bit in the power of two decomposition of $w_i$ and assigning weight $2^j$ to that copy. Since $w_i \leq ||f||$, this yields at most $\ceil{\log||f||}+1$ paths for every $P_i \in \paths$. Thus, $|\paths_w|\leq k\ceil{\log||f||+1}$.

If we only use every $i$-th bit, the bits on positions divisible by $i$ need to cover the next $i-1$ bits. Let the path $P$ have weight $w = \sum_{j=0,\dots,\ceil{\log||f||}} b_j2^j$ for $b_j\in\{0,1\}$, then $w_i = \sum_{j=0,\dots,\ceil{\log||f||/i}} 2^{ij}(b_{ij} + 2b_{ij+1} + \dots + 2^{i-1}b_{ij+i-1}) =  2^0(b_0+2b_1+\dots+2^{i-1}b_{i-1}) + 2^i(b_i+2b_{i+1}+\dots+2^{i-1}b_{2i-1}) + \dots$.
The sums $(b_{ij} + 2b_{ij+1} + \dots + 2^{i-1}b_{ij+i-1}) \leq 2^i-1$ denote how often we need to copy $P$ using weight $2^{ij}$.
\end{proof}

\section{ILP formulations}
\label{sec:flowdecomp}
% \begin{model}

\allowdisplaybreaks
\subsection{\ST}
\label{apx:standard-ilp}
\begin{subequations}
\label{eq:mod:ST}
\begin{align}
& {\forall i \in \{1, \ldots, k\}:}  \nonumber \\
& \quad \sum_{(s,v) \in E} x_{svi} = 1 \\
& \quad \sum_{(u,t) \in E} x_{uti} = 1  \\
& \quad \sum_{(u,v) \in E} x_{uvi} - \sum_{(v,w) \in E} x_{vwi} = 0 &&  \forall v \in V \setminus \{s, t\},\\
& f_{uv} = \sum_{i \in \{1,\dots,k\}} \phi_{uvi} && \forall (u,v) \in E, \\
& {\forall (u,v) \in E, \forall i \in \{1,\dots,k\}:}  \nonumber \\
& \quad \phi_{uvi} \leq f_\text{max} x_{uvi}\\
& \quad \phi_{uvi} \leq w_i \\
& \quad \phi_{uvi} \geq w_i - (1-x_{uvi})f_\text{max} \\
& w_i \in \mathbb{Z}^+ && \forall i \in \{1, \ldots, k\}, \\
& x_{uvi} \in \{0,1\} && \forall (u,v) \in E, \forall i \in \{1,\dots,k\},\\
& \phi_{uvi} \in \N && \forall (u,v) \in E, \forall i \in \{1,\dots,k\}.
\end{align}
\end{subequations}

\subsection{\OP}
\label{apx:optimized-ilp}
\text{Equations \eqref{eq:mod:ST} together with:} 
\begin{subequations}
\allowdisplaybreaks
\label{mod:OP}
\begin{align}
x_{uvi} = 1, \quad &&\forall (u,v)\in P_i, \forall P_i\in \paths(Q) \\
w_{i+1} \leq w_i, \quad && \forall i \in \{ |Q|+1,\dots,k-1\}
\end{align}
\end{subequations}

\subsection{\IA}
\begin{subequations}
\allowdisplaybreaks
\label{mod:IA}
\begin{align}
& \emph{Minimize}\quad \sum_{(s,u)\in E} \sum_{i=1}^{|W|} x_{sui}\nonumber\\
& \emph{Subject to:} \nonumber \\
& f_{uv} = \sum_{i=1}^{|W|} w_i x_{uvi}, \quad && \forall (u,v)\in E\\
& \sum_{(u,v)\in E} x_{uvi} - \sum_{(v,u)\in E} x_{vui} = 0, \quad && \forall v\in V\setminus \{s, t\}, \forall i \in \{1,\dots,|W|\} \\
& x_{uvi} \in \mathbb{N}, \quad && \forall (u,v)\in E, \forall i \in \{1,\dots,|W|\}
\end{align}
\end{subequations}

\section{Maximum length independent safe paths}
\label{apx:optimal-safe-paths}

The approach used in \Cref{sec:optimized-ilp} to calculate independent safe paths of a flow network $(G,f)$ does not return them of maximum length, but works only as a heuristic. In order to find independent safe paths of maximum length in polynomial time, one can, as in the heuristic approach, reduce to maximum weight antichains, on the following \textit{dependency} graph $\mathcal{D}=(V(\mathcal{D}),E(\mathcal{D}))$: nodes represent safe paths, with given weights of their length. A directed edge $(u,v)$ is added if there exists a path in $G$ which first traverses both safe paths represented by $u$ and $v$, entering the safe path of $u$ first and then entering the safe path of $v$. 

\begin{lemma}
    $A\subseteq V(\mathcal{D})$ is an independent set in $\mathcal{D}$ if and only if the set of safe paths corresponding to the nodes in $A$ is independent in $G$, where the total weight of $A$ is the total length of the corresponding safe paths. As a result, a maximum length independent path set of safe paths in $G$ can be calculated in polynomial time $O(|\mathcal{S}|^2\cdot\sum_{S\in\mathcal{S}} |S|)\subseteq O(|\mathcal{S}|^2\cdot n\cdot m)$, where $\mathcal{S}$ is the set of all safe paths.
\end{lemma}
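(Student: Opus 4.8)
The plan is to separate the statement into a purely combinatorial equivalence (the biconditional together with the weight identity) and its algorithmic consequence (the polynomial running time), and to treat these in turn. The equivalence I would read off directly from the definition of $\mathcal{D}$. By construction an arc $(u,v)$ or $(v,u)$ is present in $E(\mathcal{D})$ exactly when some path of $G$ contains both safe paths $S_u$ and $S_v$ as subpaths, the orientation only recording which is entered first. Hence $S_u$ and $S_v$ fail to be independent in $G$ (they share a superpath) if and only if at least one of the two arcs is present. Quantifying over all pairs in $A$ and taking contrapositives gives precisely the claim: $A$ is free of arcs in either direction iff the corresponding safe paths are pairwise independent in $G$. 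The weight identity is immediate, since each node carries as weight the length of its safe path, so the weight of $A$ equals the total length of the chosen safe paths.

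The substantive part is the algorithmic claim, because a maximum-weight independent set is NP-hard in a general graph. I would therefore show that $\mathcal{D}$ is the comparability digraph of a strict partial order on $\mathcal{S}$, so that its independent sets coincide with its antichains, allowing us to invoke the minimum-flow reduction for maximum-weight antichains of~\cite{rival2012graphs}. Concretely I would establish that the relation ``$S_u$ precedes $S_v$ on a common superpath'' is acyclic and transitive. Acyclicity uses the topological order of $G$: if two safe paths lie on a common path and start at distinct nodes, one start node reaches the other, which fixes their order in \emph{every} common superpath and forbids both arcs. Transitivity follows by splicing: from a witness $P_1 = X\,S_v\,Y$ for $(u,v)$ and a witness $P_2 = X'\,S_v\,Y'$ for $(v,w)$, the walk $P_3 = X\,S_v\,Y'$ is again a simple path of $G$, since it agrees with $P_1$ up to the end of $S_v$ and with $P_2$ afterward, and $P_3$ contains $S_u$ before $S_v$ before $S_w$, witnessing $(u,w)$. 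Transitivity also makes $\mathcal{D}$ coincide with its own transitive closure, so ``no arc between $u,v$'' and ``$u,v$ incomparable'' are the same condition; thus a maximum-weight antichain is exactly a maximum-length independent set of safe paths.

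The step I expect to be the main obstacle is making this partial-order argument watertight when safe paths overlap, are nested, or share a start node, since then ``entered first'' is ambiguous and a naive reading could create a $2$-cycle. I would fix a tie-breaking convention (for instance, order first by the position of the start node and then of the end node along a fixed topological order of $G$) and verify that it preserves both the acyclicity argument and the splicing step, checking separately the disjoint case (a reachability condition between the end of $S_u$ and the start of $S_v$) and the overlapping case (a suffix of $S_u$ equaling a prefix of $S_v$).

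Finally, for the running time I would account for the two phases. Building $\mathcal{D}$ examines the $O(|\mathcal{S}|^2)$ ordered pairs of safe paths, each of which is resolved by the reachability/overlap tests above within $O(\sum_{S\in\mathcal{S}}|S|)$ time; the minimum-flow based maximum-weight antichain computation is polynomial in the size of $\mathcal{D}$ and its total weight. Together these give the claimed bound $O(|\mathcal{S}|^2\cdot\sum_{S\in\mathcal{S}}|S|)\subseteq O(|\mathcal{S}|^2\cdot n\cdot m)$, since $\sum_{S\in\mathcal{S}}|S|$ and $|\mathcal{S}|$ are each polynomial in $n$ and $m$.
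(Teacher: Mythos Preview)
Your proposal is correct and follows essentially the same route as the paper: the biconditional is read off from the definition of $\mathcal{D}$, transitivity is established by the same splicing argument (combine a witness for $(u,v)$ with one for $(v,w)$ at $S_v$), and the polynomial bound comes from reducing maximum-weight independent set to a maximum-weight antichain on the resulting transitive digraph. You are in fact more careful than the paper, which neither argues acyclicity explicitly nor addresses the overlap/nesting tie-breaking issue you flag; it simply asserts transitivity and cites~\cite{mohring1985algorithmic} for the polynomial antichain computation.
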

\begin{proof}
    The graph $\mathcal{D}$ is a transitive graph, i.e. $(u,v),(v,w)\in E(\mathcal{D})$ implies that $(u,w)\in E(\mathcal{D})$. That is because path dependencies are transitive: If there is a path $P_1$ traversing safe paths $s_1$ and $s_2$, and there is a path traversing safe paths $s_2$ and $s_3$, one can construct a path that traverses all of $s_1,s_2$ and $s_3$. Thus, independent node sets in $\mathcal{D}$ correspond exactly to independent safe paths in $G$: Two safe paths are independent if and only if they are not connected by an edge in $\mathcal{D}$. The graph $\mathcal{D}$ can be constructed in $O(|\mathcal{S}|^2\cdot m)$ time, by performing a graph search for every pair of safe paths to check whether they are independent or not.

    It is a classical result~(\cite{mohring1985algorithmic}), that finding a maximum weighted independent node set on transitive graphs is solvable in polynomial time by finding a maximum weight node antichain. Since the weights of the nodes are $|S|$ for $S\in\mathcal{S}$, it runs in time $O(|E(\mathcal{D})|\cdot \sum_{S\in\mathcal{S}}|S|) \subseteq O(|\mathcal{S}|^2\cdot\sum_{S\in\mathcal{S}}|S|)$.
\end{proof}

%\section{Optimization statistics}
%\label{apx:optimization-statistics}
%
%\begin{table*}[!h]
%\centering
%\caption{Optimization statistics
%\label{tab:MFD_stats}}
%\resizebox{\columnwidth}{!}{%
%\begin{tabular}{lccccc}
%\toprule
%\midrule
%\textbf{MFD} & $|E|$ original & $|E|$ Y-to-V & LB = UB & $P_\text{IND}$ & \#IND = 1 / $\sum |P|$\\
%\textbf{SRR020730~Salmon} & - & - & - & -\\
%\textbf{Catfish Human} & 621473 & 109948 (17.69\%) & 63.64\% & 8.18\% & -\\
%\textbf{Catfish Mouse} & 687763 & - & 54.85\% & 7.43\% & -\\
%% RefSim & - & - & 100\% & 1.04\%\\
%\midrule
%\textbf{MFDSC} & & & &\\
%\textbf{SRR020730~Salmon} & 214380 & 94986 (44.30\%) & 91.41\% & 7.59\% & -\\
%\textbf{SRR30790~StringTie} & 25017 & 8996 (35.96\%) & 30.93\% & 12.39\% & -\\
%\midrule
%\bottomrule
%\end{tabular}
%}
%\end{table*}

\end{document}